\documentclass[conference]{IEEEtran}

\usepackage{newtxtext,newtxmath}
\usepackage{amsfonts}

\usepackage{amsmath,amssymb,amsthm}
\usepackage{graphicx}
\usepackage{hyperref}
\usepackage{algorithm}
\usepackage{algpseudocode}
\usepackage{booktabs}
\usepackage{microtype}
\usepackage{tikz}
\usepackage{pgfplots}
\usepackage{pifont}
\usetikzlibrary{arrows.meta,positioning,fit}
\usepackage{comment}
\pgfplotsset{compat=1.18}

\newtheorem{theorem}{Theorem}

\begin{document}

\title{
TrustRAG: Blockchain-Enhanced RAG via Committee-Based Credibility Scoring
}

% TODO: fill in author names, affiliations, and emails before uploading to arXiv.
% Example for multiple authors sharing an affiliation:
%
% \author{\IEEEauthorblockN{First Author\IEEEauthorrefmark{1}, Second Author\IEEEauthorrefmark{2}}
% \IEEEauthorblockA{\IEEEauthorrefmark{1}School of Computing and Intelligent Innovation, Fudan University, Shanghai, China\\
% Email: name@fudan.edu.cn}
% \IEEEauthorblockA{\IEEEauthorrefmark{2}Affiliation Two\\
% Email: name@example.edu}}
%\author{\IEEEauthorblockN{Author Name(s)}
%\IEEEauthorblockA{Affiliation\\
%Email: author@example.edu}}
\author{\IEEEauthorblockN{Baixiang Liu}
	\IEEEauthorblockA{Fudan University\\
		bxliu@fudan.edu.cn}
	\and
	\IEEEauthorblockN{Haotian Che}
	\IEEEauthorblockA{Fudan University\\
		23210240106@m.fudan.edu.cn}
	\and
	\IEEEauthorblockN{Yuan Li}
	\IEEEauthorblockA{Fudan University\\
		yuan\_li@fudan.edu.cn}
}

\maketitle

\begin{abstract}

Retrieval-Augmented Generation (RAG) lets Large Language Models (LLMs) pull in up-to-date, domain-specific information instead of relying only on what they were trained on. Yet most RAG systems still draw from centralized databases with limited oversight, making it difficult to verify where a document came from, whether it has been tampered with, or whether it should be trusted at all. 
This is a serious problem in domains where both the timeliness and accuracy of retrieved content are critical, such as healthcare, finance, logistics, and legal case law, where a wrong or manipulated document can directly lead to bad decisions.

We present TrustRAG, a committee-based, blockchain-backed RAG system: before a document is used, it is certified by a committee of domain experts through a zero-knowledge protocol, and the committee's hidden scores are combined via secure multi-party computation into a trust score that any client can verify. These scores, along with the underlying document data, are maintained jointly across chains through hash commitments, so no document or score can be silently altered or dropped, and every ranking can be independently replayed and checked.

\end{abstract}

\section{Introduction}

Large Language Models have achieved remarkable capabilities across reasoning, generation, and knowledge-intensive tasks, yet their reliance on static training datasets fundamentally constrains their ability to provide accurate, up-to-date, and verifiable information. Addressing this limitation, Retrieval-Augmented Generation (RAG) has emerged as a promising paradigm that enriches LLMs with external knowledge retrieval, thereby enhancing factual accuracy and contextual relevance \cite{lewis2020retrieval}.
RAG promises more accurate, up-to-date, and domain-specific responses. This paradigm has therefore been widely adopted in areas such as healthcare, finance, and law, where strong generative capabilities must coexist with strict requirements on privacy, compliance, and auditability.

However, despite its growing role in improving factual grounding, the architecture of modern RAG systems exhibits a structural vulnerability: the retrieval pipelines, indexing services, and augmentation logic remain heavily centralized. Data flows, update policies, and ranking heuristics are typically controlled by a small number of providers or internal platform teams, making it difficult for downstream users or regulators to inspect, verify, or challenge the retrieval layer's behavior. As a result, the very mechanism meant to reduce hallucinations and ground outputs in fact is itself implemented as an opaque subsystem.

The evolution of RAG infrastructures echoes earlier trends in Internet architecture: systems that begin with open, protocol-based principles often drift toward platform-centric consolidation. Early protocols such as TCP/IP were designed to distribute control and enable interoperability without centralized gatekeepers, yet were gradually overshadowed by proprietary platforms that centralized data ownership and shaped information flows through opaque algorithms \cite{garg2019copa,abreu2025should}. RAG infrastructures now show the same trajectory: a modular, evidence-driven mechanism increasingly resembles a closed, platform-governed subsystem whose behavior cannot be independently audited or verified.

This opacity gives rise to two challenges in contemporary RAG systems: evidence trustworthiness and process governance. The first is that retrieved sources are opaque, making it hard to verify their integrity, legality, or authority. Centralized pipelines offer no reliable way to confirm where a passage originated, whether it was modified, or whether it complies with copyright and data-protection requirements. This opacity is exploitable: attackers can dilute relevant evidence by injecting superficially similar but irrelevant text~\cite{chen2024densex,fang2024ragbench}, or distribute contradictory statements across sources to obscure authoritative information and induce ambiguous or incorrect outputs~\cite{wu2024clasheval,liu2023recall,zhou2024trustworthiness}. Without verifiable audit trails or cryptographic provenance, it is difficult to tell whether anomalous retrieval behavior stems from system limitations, data-quality issues, or deliberate manipulation.

The second challenge is that the retrieval-and-augmentation layer itself is opaque, obscuring why a system produces harmful outputs or refuses to respond. This layer acts as a gatekeeper over what conditions the model's responses, yet centralized ranking and filtering heuristics offer little visibility into how conflicting or harmful content is prioritized or suppressed. Such opacity introduces further risks: insufficient filtering can let unsafe or toxic content shape generations~\cite{deshpande2023toxicity,perez2022ignore}, while targeted denial-of-service cues can push RAG pipelines into refusing to answer even when relevant evidence is available~\cite{chaudhari2024phantom,shafran2024mar}. Because these behaviors often resemble benign limitations, operators struggle to pinpoint the cause, complicating efforts to improve robustness and accountability.

A further difficulty arises when document trustworthiness must be evaluated collectively, e.g., by distributed validators, expert committees, or multiple stakeholders, rather than by a single curator. Collecting such scores in plaintext risks leaking evaluator preferences or enabling strategic manipulation; aggregating them inside a closed backend merely turns the trust value into another opaque platform output. Trustworthy RAG therefore requires not only decentralized provenance records, but also a privacy-preserving mechanism for combining distributed trust inputs into verifiable retrieval metadata. %We refer to the resulting system as \emph{TrustRAG}.

Taken together, these developments reveal a growing tension at the core of contemporary RAG practice. Mainstream AI infrastructure, from model training to retrieval, remains highly centralized, yet the data that matters most, e.g., time-sensitive, high-accuracy, access-controlled, and frequently updated, often originates from inherently distributed and decentralized sources: hospitals, regulators, financial institutions, and expert communities that do not share a single trusted curator. Reconciling this mismatch calls for a blockchain-enhanced RAG architecture, one that provides decentralized provenance, privacy-preserving trust aggregation, and verifiable retrieval for high-stakes domains where corrupted evidence or unaccountable filtering carry serious operational and regulatory costs.

To address these challenges, we propose TrustRAG, a decentralized, verifiable, and auditable RAG framework that combines immutable document registration with reusable credibility computation. Documents are first registered with content and embedding hashes as immutable provenance anchors. Validators then submit privacy-preserving quality evaluations bound to document identities and set hashes, enforcing membership, uniqueness, and score correctness without disclosing votes. During consolidation, hidden score components are split into Shamir secret shares and processed by committee nodes via an MP-SPDZ-based secure aggregation interface, yielding chain-local tallies and credibility values without revealing any individual vote or blinding factor. Rather than generating a recursive global aggregation proof, the system binds per-chain outputs through hash commitments and exposes sufficient metadata for deterministic replay of the final ranking. The resulting proofs and aggregate summaries are reused during later retrieval and audit, yielding a verifiable response package while keeping the protocol substantially simpler than recursive-proof alternatives. This design targets domains where knowledge is time-sensitive, accuracy-critical, access-controlled, and continuously updated, such as healthcare, finance, breaking news, logistics and traffic, and legal precedents, where data is naturally distributed across independent, mutually distrusting sources.

\subsection{Related Work}
Recent work has turned to blockchain as a mechanism for improving data reliability and trustworthiness in RAG systems.

Andersen, Avalos, Dagher and Long proposed D-RAG~\cite{e_andersen2025d}, a blockchain-based framework that organizes knowledge into domain-specific communities, where data is validated by field experts prior to database inclusion via a Privacy-Preserving Knowledge Incorporation (PPKI) protocol. PPKI employs zero-knowledge proofs and homomorphic encryption to realize a double-blind consensus, hiding both the proposer's identity and members' votes to mitigate validation bias.
A separate Retrieval Blockchain then coordinates LLM-assisted document ranking and response generation across communities.
Unlike prior work that detects malicious content only at retrieval time, D-RAG addresses data integrity at the source, preventing unsafe data from entering the knowledge base in the first place.

Yu and Sato proposed DeRAG~\cite{yu2024derag}, a decentralized multi-source RAG system that adapts the Pyth Network's oracle technology for knowledge-intensive applications via the RAG-Optimized Pyth Consensus (ROPC) protocol. ROPC extends oracle-based validation to multi-dimensional textual data through tensor-based representations and semantic consistency checks, coordinated by a domain-specialized validator network organized as a directed acyclic graph. This design demonstrates that decentralized oracle mechanisms originally developed for financial data feeds can be effectively repurposed to improve data integrity and retrieval reliability in RAG pipelines.

Lu, Tan, Johnson, Jung and Jiang proposed a decentralized RAG system with a \emph{dynamic reliability scoring mechanism}~\cite{lu2025decentralizedretrievalaugmentedgeneration}, in which each data source is assigned cumulative reliability and usefulness scores updated via sentence-level importance estimation and user feedback.
By incorporating these scores into both source sampling and document reranking, their system progressively prioritizes higher-quality sources without requiring centralized data management. Notably, their approach achieves performance comparable to centralized systems operating on fully reliable data, despite operating over independently maintained, potentially noisy sources.

Lin, Cui, Zhou, et al.~proposed VeriRAG~\cite{VeriRAG2026}, a zero-knowledge proof framework that provides efficient integrity guarantees for the retrieval step of RAG systems. VeriRAG allows a service provider to prove that its returned top-$k$ documents are genuinely the output of an Approximate Nearest Neighbor Search (ANNS) executed faithfully over its committed corpus, without revealing the underlying dataset. To make this practical, the authors introduce a protocol that sidesteps the costly verification of the sorting process itself, together with vector-lookup and chunk-merging optimizations that jointly reduce proving overhead. The resulting system scales to a 37GB dataset with a 96-second prover time and a 3-second verifier time, demonstrating that succinct, privacy-preserving proofs of retrieval correctness are practical even at scale.

Shukla and Joshi proposed Proof-Carrying Answers (PCA)~\cite{shukla2025proofcarrying}, a protocol that attaches a signature and a Merkle proof to every retrieved chunk and admits a generated claim only if all its supporting chunks pass verification, abstaining otherwise. This is an elegant shift in RAG's default posture, from "trust, then verify" to "verify, then trust": cryptographic grounding is checked before an answer is released, rather than being left to post-hoc auditing, and the use of Merkle proofs keeps this check lightweight enough for practical deployment. The protocol relies on a single signer and a Merkle-committed corpus snapshot, treating trustworthiness as a matter of provenance -- that a chunk is unmodified and traceable to a registered source.

\subsection{Our Contributions}

Our paper makes the following contributions.
\begin{enumerate}
\item \textbf{Committee-Certified Knowledge Base.} We introduce a pre-certification architecture in which documents are endorsed by an expert validator committee at registration time, producing reusable trust artifacts that downstream retrieval inherits without repeating the scoring path on every query.

\item \textbf{Privacy-Preserving Multi-Party Trust Aggregation.} We combine the zero-knowledge scoring workflow with Pedersen commitments, Shamir secret sharing, and an MP-SPDZ secure-sum interface so that committee nodes can aggregate hidden validator inputs without revealing individual scores or blinding factors.

\item \textbf{Cross-Chain Binding Without Recursive Proofs.} We replace a global aggregation proof with a lightweight binding layer that hashes each per-chain scoring summary and candidate list into a single global digest, preventing weight substitution or chain omission while reducing system complexity.

\item \textbf{Deterministic Replay for Ranking Verification.} We publish sufficient metadata for any client to independently recompute the ranking scores, reproduce the top-$k$ document selection, and verify that the final answer is consistent with the retrieved evidence, without relying on a recursive proof.
\end{enumerate}

Our core contribution is an end-to-end verifiable pipeline that connects committee-based certification at registration time to cryptographically auditable retrieval at query time. Documents are endorsed by an expert validator committee through a privacy-preserving voting protocol at registration, producing reusable trust artifacts that persist with each document identity. At query time, these artifacts directly inform ranking, and the entire process---from individual votes through score aggregation to final document selection---can be independently replayed and verified by any third party without relying on a trusted intermediary or a recursive aggregation proof. To our knowledge, this is the first decentralized RAG framework that integrates privacy-preserving certification, cross-chain binding, and deterministically replayable ranking into a unified auditable system.

TrustRAG is designed for high-stakes domains where the cost of misinformation is severe and the need for accountable knowledge provenance is paramount. Examples include:
\begin{itemize}
    \item \textbf{Healthcare.} Clinical guidelines and medical literature must be endorsed by qualified experts before influencing diagnostic or treatment recommendations.
    \item \textbf{Legal practice.} Case law and regulatory documents require professional validation to ensure jurisdictional relevance and currency.
    \item \textbf{Finance.} Market data and financial disclosures must be current and traceable to a qualified source, since a stale or manipulated figure can directly distort trading and investment decisions.
    \item \textbf{Logistics and traffic.} Routing, capacity, and disruption information changes continuously, and outdated or unverified reports can propagate through downstream planning decisions.
    \item \textbf{News.} Breaking coverage places a premium on both timeliness and accuracy, and expert or outlet-level endorsement helps prevent unverified or fabricated reports from shaping generated summaries.
    \item \textbf{Government and public administration.} Policy documents and official records must carry verifiable provenance to prevent the citation of forged or superseded versions.
\end{itemize}
\begin{comment}
\begin{itemize}
    \item \textbf{Healthcare.} Clinical guidelines and medical literature must be endorsed by qualified experts before influencing diagnostic or treatment recommendations.
    \item \textbf{Legal practice.} Case law and regulatory documents require professional validation to ensure jurisdictional relevance and currency.
    \item \textbf{Academic research.} A committee-based endorsement mechanism mirrors the peer-review process, guarding against retracted or pseudoscientific material entering the retrieval pool.
    \item \textbf{Government and public administration.} Policy documents and official records must carry verifiable provenance to prevent the citation of forged or superseded versions.
    \item \textbf{Education.} Instructional materials and assessment content require expert certification to ensure factual accuracy.
\end{itemize}
\end{comment}

Across all these settings, TrustRAG provides a framework in which domain experts collectively certify knowledge at the source, and that certification remains cryptographically bound to every subsequent retrieval and ranking decision.

\section{Challenges and Design Overview}
\label{sec:challenges}

Today's RAG systems face a fundamental governance problem: \emph{retrieval is centralized, trust signals are opaque, and the ranking process is unauditable.} Addressing this requires resolving three concrete technical challenges.

\paragraph*{Challenge 1: secure candidate scoring under untrusted validators.}
Trust metadata should be derived from distributed validators rather than from a single curator, but the system must prevent unauthorized participation, duplicate scoring, and out-of-range manipulation within each scoring round bound to a designated document set. At the same time, individual scores and validator identities should remain hidden.

\paragraph*{Challenge 2: privacy-preserving credibility with public verifiability.}
Even if individual votes are hidden, the system still needs a publicly auditable way to prove that each finalized credibility value is consistent with the submitted commitments and with the exact document identities used at retrieval time.

\paragraph*{Challenge 3: cross-chain binding and ranking verification without recursive proofs.}
Once per-chain credibility values are available, the system must still prevent an untrusted service from changing candidate sets, omitting required chains from a declared retrieval round, or reranking documents incorrectly. A practical design should therefore provide end-to-end verifiability without incurring the implementation and proving complexity of a recursive global proof.

TrustRAG addresses these challenges by combining immutable document registration, reusable zero-knowledge scoring, committee-side MPC aggregation, and replay-oriented verification. Validators first prove authorized scoring over registered document identifiers using zero knowledge. Their hidden score components are then aggregated through committee-side MPC, and the resulting document-level values are checked against the underlying commitments and exposed through finalized chain records. These scores are finalized ahead of query time and remain bound to stable document identities rather than to any single future query. At retrieval time, the service derives a candidate set, hashes the exact returned list, fetches the already finalized per-document scores for that list, and binds them through the Aggregator.
Finally, the retrieval layer publishes the finalized scores, and the associated zero-knowledge proofs, allowing any client to verify the validity of the scoring and aggregation and to confirm that the returned answer hash is correctly bound to them.

Compared with representative centralized and decentralized RAG architectures discussed in the literature, TrustRAG emphasizes private candidate scoring, chain-local verifiable tallying, cross-chain binding, and proof-based verification.

\section{Preliminaries}
\label{sec:preliminaries}

\subsection{Retrieval-Augmented Generation}
\label{subsec:rag}
Retrieval-Augmented Generation (RAG)~\cite{lewis2020retrieval} enhances Large Language Models (LLMs) by combining external knowledge retrieval with text generation. Given a query $Q$ and a document collection $\mathcal{D}=\{d_1,\dots,d_n\}$, the retriever applies an encoder $f(\cdot)$ to obtain dense vector representations $\mathbf{q}=f(Q)$ and $\mathbf{d}_i=f(d_i)$, and computes their similarity via cosine similarity:
\[
\text{sim}(\mathbf{q},\mathbf{d}_i)
=\frac{\mathbf{q}\cdot\mathbf{d}_i}{\|\mathbf{q}\|\|\mathbf{d}_i\|}.
\]
The top-$k$ documents $\mathcal{D}_k \subseteq \mathcal{D}$ with the highest similarity scores are selected and concatenated with $Q$ as additional context for the generator $G(\cdot)$, which produces the final response:
\[
R = G(Q, \mathcal{D}_k).
\]
This retrieval-generation pipeline enables LLMs to access up-to-date information without retraining, improving factual accuracy and reducing hallucination~\cite{izacard2020leveraging,gao2023rag}.

\subsection{Zero-Knowledge Proofs}
\label{subsec:zkp}
A zero-knowledge proof (ZKP)~\cite{goldwasser1989zkp} is a protocol that allows a prover $P$ to convince a verifier $V$ that a statement $x$ is true, i.e., there exists a witness $w$ such that $R(x,w)=1$, without revealing any information about $w$ beyond the validity of the statement.

Modern succinct constructions such as Groth16~\cite{groth2016} and PLONK~\cite{gabizon2019plonk}, along with their subsequent developments, enable short proofs and fast verification, and are widely used for privacy-preserving and verifiable computation.

\subsection{Secure Multi-Party Computation}
\label{subsec:mpc}

Secure Multi-Party Computation (MPC) allows multiple parties to jointly compute a function over their private inputs without revealing those inputs to one another. Each party holds its own input and participates in a sequence of interactive computation rounds; at no point during these rounds does any party observe another party's raw input or any intermediate value in the clear. The final result is produced jointly, so that no single party ever reconstructs it alone.

In our framework, MPC protects the aggregation stage after zero-knowledge voting: committee nodes' hidden score components are combined jointly, revealing only the final aggregate while each node's contribution stays hidden.

\section{Threat Model and Trust Assumptions}
\label{sec:threat-model}

This section formalizes the threat model and security objectives RAG framework. We explicitly characterize the system participants, adversarial capabilities, and the information that must remain confidential in order to guarantee robustness, privacy, and security.

\subsection{System Model}

The system consists of five main components: (i) users who issue queries and receive generated responses, (ii) a RAG Service responsible for candidate retrieval and response generation, (iii) multiple Data Chains that maintain document provenance and finalized credibility metadata together with the corresponding scoring records, (iv) per-chain MPC committees that aggregate hidden scores, and (v) an Aggregator that records cross-chain binding hashes but does not produce a global recursive proof.

The RAG Service is not assumed to be fully trusted. Instead, its behavior is constrained through cryptographic commitments, chain-local verifiable state, cross-chain binding hashes, and a deterministic replay rule for ranking.

\subsection{Adversary Model}

We consider a probabilistic polynomial-time adversary with the
following capabilities:
\begin{itemize}
  \item observing all on-chain data and public network
  communication;
  \item injecting malicious documents or metadata into Data
  Chains;
  \item controlling a subset of voters or validators and
  attempting collusion or Sybil attacks;
  \item corrupting a subset of committee nodes involved in
  secure aggregation; and
  \item deviating from the prescribed protocol execution.
\end{itemize}

We assume that, within each consensus domain, an honest majority or
honest threshold of participants is maintained. For committee
aggregation, privacy and correctness follow the threshold assumption
of the underlying $t$-of-$n$ Shamir-sharing protocol.

To avoid ambiguity, we state three trust assumptions explicitly.

\begin{itemize}
  \item \textbf{Scores $\neq$ Truth.} A document's
  credibility score reflects committee endorsement, not truth. A
  high score means validators approved the document through the
  scoring protocol; it does not mean the content is factually
  correct or safe.

  \item \textbf{Completeness w.r.t.\ a Declared Set.}
  Completeness is defined relative to a declared chain set. Before
  each retrieval round $\mathsf{rid}$, the service and verifier
  agree on an ordered set of participating chains
  $\mathcal{J}_{\mathsf{rid}}$. A verifier can only check for
  omitted or reordered chains within this declared set, not against
  chains outside it.

  \item \textbf{Detectability, Not Availability.} Our
  guarantees cover tampering detection, not service availability.
  If the RAG Service alters data or deviates from the protocol,
  this is publicly detectable. However, we do not guarantee that
  the service will return a result at all; it may simply refuse to
  respond.
\end{itemize}

\subsection{Security and Privacy Objectives}

Our design aims to satisfy the following security and privacy
objectives.
\begin{itemize}
    \item \textbf{Security Property 1 (Data Integrity).} Once a
    chain-local tally is accepted, it must match the commitments
    recorded on-chain; results cannot be altered after the fact.

    \item \textbf{Security Property 2 (Vote Privacy).} Beyond the
    final aggregate, no one should learn an honest validator's
    score or identity.

    \item \textbf{Security Property 3 (Trust-Metadata Integrity).}
    A document's credibility score can only rise through valid,
    accepted votes. An adversary cannot inflate it without either
    corrupting enough of the trust domain or breaking the
    underlying cryptographic assumptions.

    \item \textbf{Security Property 4 (Retrieval Verifiability).}
    Clients can independently verify that the credibility scores
    and final ranking come from finalized chain state, computed by
    the declared deterministic rule---not fabricated by the
    service.
\end{itemize}

\subsection{Confidentiality and Public Verifiability}
To meet the objectives above, some information must stay hidden
during the protocol, while other information must be public enough
for anyone to check that nothing was tampered with.

Individual scores and voter identities must be hidden. If a score
were public, a validator could be bribed or threatened into voting
a certain way. If a voter's identity were linked to their score,
they could face retaliation or be profiled based on their voting
history. Hiding both prevents coercion, vote-buying, and collusion.

User queries, and anything that could reveal what a user was
looking for, are also kept off-chain. We do not offer a dedicated
query-hiding protocol beyond this; queries simply never touch the
chain. Likewise, the retrieval process itself is not shown to
outside observers in full: candidate document sets are hashed into
$l_j$, and only this hash is published. In short, the
only public information consists of cryptographic commitments,
chain-level tally records, zero-knowledge proofs, and cross-chain
binding hashes---nothing more.

Even with all this hidden, the system remains fully verifiable.
Each vote is published as a commitment plus a zero-knowledge proof
showing the vote was valid (from a registered validator, not a
duplicate, within the allowed score range) without revealing the
score itself. Each chain publishes its aggregate results: the
credibility scores for its documents, along with a digest binding
these scores to the corresponding document identities. The
Aggregator publishes, for each chain, a hash combining that chain's
score digest with its candidate-list digest, and then combines all
of these per-chain hashes into a single global digest. The service
similarly publishes hashes of the final ranking and the response.
Together, these let anyone verify that each chain's tally is
correct, no chain was skipped, and the final ranking is
consistent---all without ever seeing an individual vote.

\subsection{Out of Scope}
Our system does not guarantee that the LLM's output is factually
correct, and it does not defend against prompt injection or other
attacks on generation itself. It also cannot help if a validation
domain is fully compromised---for example, if enough validators and
committee members collude to endorse malicious content---or if a
validator simply makes a poor judgment call when scoring a document.

What we do guarantee is detectability, not prevention: as long as
the honest-threshold assumption holds, any attempt to tamper with
document provenance, forge aggregation results, omit required
chains, or manipulate trust scores can be publicly caught and
audited.

\section{System Overview}
\label{sec:system-overview}

\subsection{Architecture}

\begin{figure*}[t]
\centering
\includegraphics[width=\textwidth]{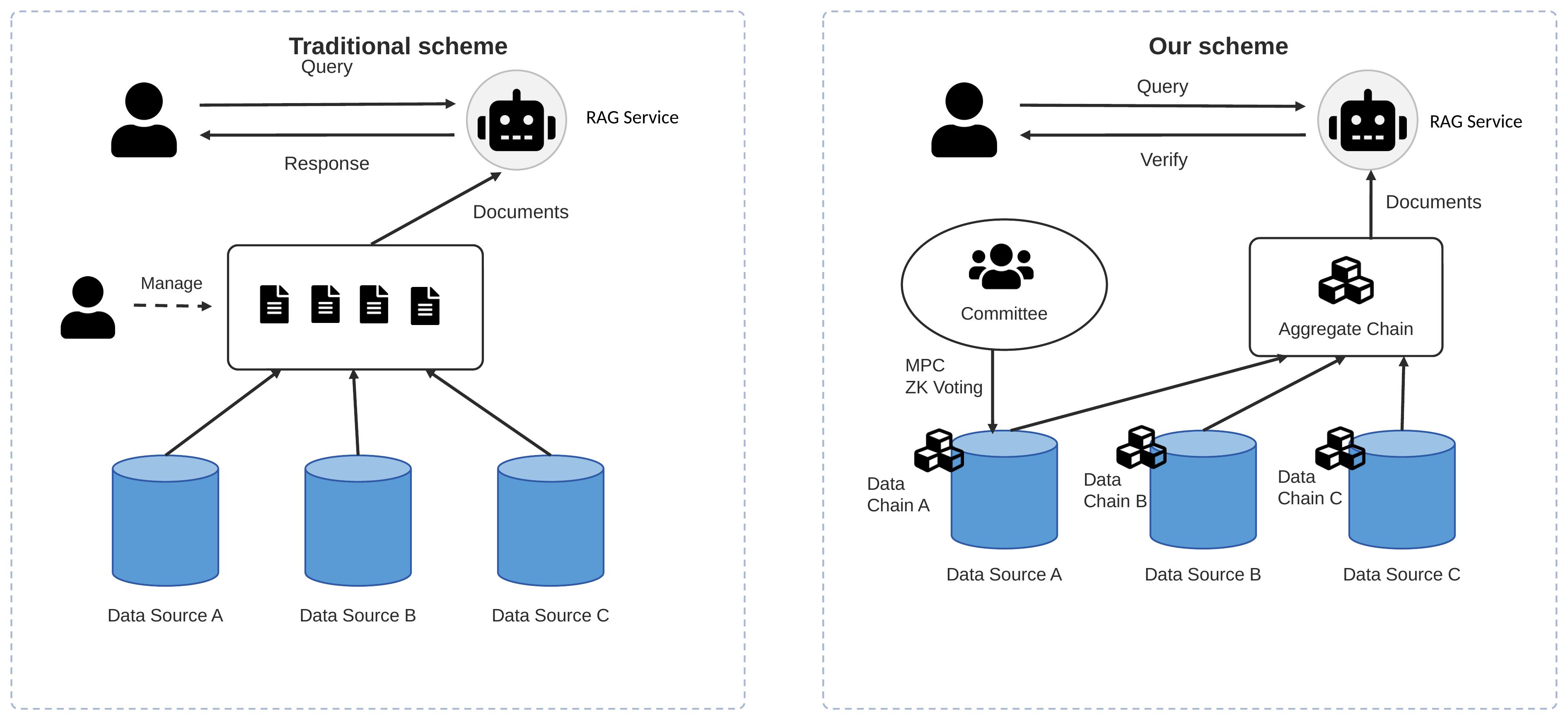}
\caption{System Overview}
\label{fig:system}
\end{figure*}

As shown in Fig.~\ref{fig:system}, the system has three main layers: \textbf{Data Chains}, the \textbf{Aggregator Binding Layer}, and the \textbf{RAG Service Layer}. Each Data Chain has its own MPC committee for secure tallying and optional auditing. The retrieval service itself is treated as untrusted and not necessarily decentralized. Document provenance is fixed at registration, trust scores are produced through protected scoring and aggregation, and cross-chain consistency is enforced via hash binding and deterministic replay, without relying on recursive proofs. In our prototype, the expensive cryptographic scoring is done once at ingestion or score-submission time, and the resulting artifacts are simply reused during later retrieval.

\subsubsection{Data Chains}
Each data chain manages its own corpus and validation logic independently. When a document is ingested, the chain records an immutable provenance anchor -- its identifier and hash -- along with the trust artifacts produced by validator scoring. When a query arrives, the retrieval service selects a set of candidate documents from the chain, and the chain exposes the finalized credibility scores for those candidates. The service then computes a hash over the exact candidate list, binding it to the scores it received. These records serve as verifiable attestations of chain-local credibility and form the basis for ranking. In our prototype, the zero-knowledge score proofs are generated once when scores are submitted, and reused across later queries rather than regenerated each time.

Score consolidation itself relies on committee-assisted MPC. Instead of revealing raw score components to a single aggregator, validators split their hidden score values into shares distributed across committee nodes, which jointly compute the sum and authorize its release. Only the aggregate document-level score is derived and recorded on-chain, together with artifacts needed for later verification; individual validator scores stay hidden. Committee-side threshold authorization and secret-sharing details mainly come into play during score finalization or post-hoc auditing.

\subsubsection{Aggregator Layer}
The Aggregator is the cross-chain binding layer. Instead of verifying one big recursive proof, it hashes each chain's score hash and candidate-list hash together into a single digest per chain, then combines all chains' digests into one global digest. This binding step prevents a chain from being skipped, a document's score from being swapped, or a candidate list from being replaced -- while keeping the cross-chain logic simple and lightweight.

MPC only plays a role earlier, inside each chain, as a privacy-preserving step for consolidating scores. By the time results reach the Aggregator, each Data Chain has already verified its own aggregated scores and finalized its per-document credibility values. The Aggregator simply binds each chain's finalized score hash and candidate-list hash together -- it does not repeat any of that chain-local verification.

\subsubsection{RAG Service Layer}
The RAG Service Layer is the off-chain interface between users and the verifiable multi-chain knowledge base. Given a query, it runs semantic retrieval to get candidate documents from each chain, fetches their credibility scores and proof references from the finalized chain state, and pulls the binding digest from the Aggregator. A trust-aware ranking step then combines semantic relevance with document credibility, so results reflect both how relevant a document is and how trustworthy its source is. The top-ranked documents are passed to the LLM generator, and the final output is bundled into a ProofPack containing the metadata a client needs to verify it.

Together, these three layers form a unified architecture: data quality assessment is decentralized across chains, integrity is enforced through binding hashes, and the off-chain retrieval service itself stays accountable and verifiable.

\subsection{Query Flow}
\label{subsec:queryflow}

\begin{table}[t]
\centering
\caption{Notation used in the query flow.}
\label{tab:notation}
\begin{tabular}{cl}
\toprule
Symbol & Meaning \\
\midrule
$Q$              & User query \\
$\mathcal{C}$    & Set of chains participating in the query \\
$D_j$            & Candidate document set from chain $j$ \\
$\mathrm{docId}$ & Document identifier \\
$w_j$            & Document credibility score on chain $j$ \\
$h_j$            & Binding digest for chain $j$ (score + candidate list) \\
$G$              & Global digest across all chains \\
$D^\star$        & Final evidence set passed to the generator \\
$R$              & Generated response \\
$\Pi$            & Proof package returned to the client \\
\bottomrule
\end{tabular}
\end{table}

\begin{figure*}[t]
    \centering
    \resizebox{0.97\textwidth}{!}{
    \begin{tikzpicture}[
        font=\small,
        >=Latex,
        actor/.style={draw, rounded corners, minimum width=2.0cm, minimum height=0.75cm, align=center, fill=gray!8},
        lifeline/.style={densely dashed, gray!70},
        msg/.style={font=\scriptsize}
    ]
    \node[actor] (u) at (0,0) {User};
    \node[actor] (s) at (2.6,0) {TrustRAG\\Service};
    \node[actor] (r) at (5.2,0) {Retriever};
    \node[actor] (d) at (7.8,0) {Data Chains};
    \node[actor] (a) at (10.4,0) {Aggregator\\(Binding)};
    \node[actor] (l) at (13.0,0) {LLM};
    \draw[lifeline] (u.south) -- (0,-6.7);
    \draw[lifeline] (s.south) -- (2.6,-6.7);
    \draw[lifeline] (r.south) -- (5.2,-6.7);
    \draw[lifeline] (d.south) -- (7.8,-6.7);
    \draw[lifeline] (a.south) -- (10.4,-6.7);
    \draw[lifeline] (l.south) -- (13.0,-6.7);
    \draw[->] (0,-1.45) -- node[above,msg] {1) query $Q$} (2.6,-1.45);
    \draw[->] (2.6,-2.05) -- node[above,msg] {2) embed + semantic retrieve} (5.2,-2.05);
    \draw[->] (5.2,-2.65) -- node[above,msg] {3) candidate sets $\{D_j\}$} (2.6,-2.65);
    \draw[->] (2.6,-3.25) -- node[above,msg] {4) request local tally / proof} (7.8,-3.25);
    \draw[->] (7.8,-3.85) -- node[above,msg] {5) scores $w_j$, digest $h_j$} (2.6,-3.85);
    \draw[->] (2.6,-4.45) -- node[above,msg] {6) submit $h_j$ bindings} (10.4,-4.45);
    \draw[->] (10.4,-5.05) -- node[above,msg] {7) return global digest $H$} (2.6,-5.05);
    \draw[->] (2.6,-5.65) -- node[above,msg] {8) replayable evidence $D^\star$} (13.0,-5.65);
    \draw[->] (13.0,-6.05) -- node[above,msg] {9) grounded response $R$} (2.6,-6.05);
    \draw[->] (2.6,-6.45) -- node[above,msg] {10) $R + \Pi$} (0,-6.45);
    \node[font=\scriptsize, align=center] at (7.7,-6.95)
    {Verification relies on local state checks, hash binding, and deterministic replay; no global recursive proof is used.};
    \end{tikzpicture}
    }
\caption{End-to-end query flow. Candidate sets are matched to local trust artifacts, then bound across chains through hashes and verified by replay.}
    \label{fig:queryflow}
\end{figure*}

Figure~\ref{fig:queryflow} shows how a query flows through the system: from the RAG Service, to the Data Chains, through the Aggregator binding layer, and finally to the LLM, with committee-side MPC aggregation happening inside each chain's trust path. The key design choice is to separate document registration and trust scoring from online retrieval: documents are registered and scored ahead of time, and committee authorization is only invoked when scores are finalized or audited. This means that during normal query processing, the service just consumes already-finalized credibility scores for the retrieved candidates -- it does not redo any scoring. Global coordination at query time is then reduced to two simple things: binding hashes together and replaying the ranking over the declared set of chains $\mathcal{C}$.

\paragraph*{Steps 1--3: Query submission, embedding, and candidate-set binding.}
When the RAG Service receives a query $Q$, it first encodes it into a vector $\mathbf{q} = f(Q)$ and retrieves a candidate document set $D_j$ from each participating chain $j \in \mathcal{C}$. For each chain, the service also computes a list hash $l_j = H(D_j)$, which binds the exact set of candidates used in this round. This hash is computed fresh at query time -- it is not fixed in advance, since the candidate set depends on the specific query.

\paragraph*{Steps 4--6: Chain-local trust retrieval, optional tallying, and local checks.}
For each candidate document $d_t \in D_j$, the service retrieves its finalized credibility score from chain $j$. In the full design, validators on chain $j$ submit hidden scores together with zero-knowledge proofs of membership, uniqueness, range validity, and commitment correctness. Committee nodes then aggregate these hidden scores via MPC to derive each document's credibility $w_t$, without revealing any individual validator's score.

Let $W_j = \{(\text{docId}_t, w_t)\}_{d_t \in D_j}$ denote the set of scores for the retrieved candidates on chain $j$, and $s_j = H(W_j)$ its hash. In our prototype, scores are computed once when validators submit them, and simply reused at query time. So at query time, the service just fetches the relevant subset $W_j$, computes $s_j$, and checks it against the chain's on-chain records. In short: score computation happens ahead of time, while candidate-set binding happens at query time.

\paragraph*{Steps 7--8: Cross-chain binding and deterministic reranking.}
Once the chain-local outputs are ready, the Aggregator combines them into a single digest:
\[
h_j = H(s_j \,\|\, l_j), \qquad
G = H(h_1 \,\|\, h_2 \,\|\, \cdots \,\|\, h_m),
\]
where the chains are taken in a fixed canonical order over $\mathcal{C}$. This makes it detectable if any chain is dropped or reordered relative to the declared configuration.
The RAG Service then ranks each candidate document by combining semantic relevance with credibility:
\[
\mathrm{score}(d_i) = \alpha \cdot \mathrm{sim}(Q, d_i) + \beta \cdot w_i.
\]
The top-ranked documents form the final evidence set $D^\star$, which is bound into a replay hash:
\[
r = H\!\big(Q \,\|\, \mathcal{C} \,\|\, \{s_j\}_j \,\|\, \{l_j\}_j \,\|\, D^\star\big).
\]
$D^\star$ is then passed to the LLM, which produces the final grounded response:
\[
R = G(Q, D^\star).
\]

\paragraph*{Steps 9--10: Returning a verifiable response package.}
The response is bundled with a verifiable proof package:
\[
\Pi = \big(\mathcal{C},\ \{W_j\}_j,\ \{s_j\}_j,\ \{l_j\}_j,\ G,\ r,\ H(R)\big),
\]
while the underlying chain-local records remain retrievable from their source chains. The response and $\Pi$ are returned to the user, who can independently verify chain-local correctness, cross-chain completeness, ranking correctness, and response integrity.

This query flow provides end-to-end verifiability without a global recursive proof: correctness is instead established by checking chain-local state, binding hashes across chains, and replaying the ranking computation.

\section{Protocol Design}
\label{sec:protocol-design}

\subsection{Document Registration}

Before the system handles any query, each Data Chain registers its documents up front. For a document $d$, registration just needs a content hash, an embedding hash, and a metadata hash. The chain stores this as a tuple
\[
(\text{docId},\ \text{contentHash},\ \text{embeddingHash},\ \text{owner}),
\]
which becomes the document's permanent identity for everything that happens later -- retrieval, scoring, verification, all of it.

Registration itself doesn't produce a trust score. Its only job is to fix a stable identity for the document on-chain, so that when scores and verification happen later, they're always anchored to something concrete rather than to raw content that could shift or be reinterpreted.

\subsection{Zero-Knowledge Quality-Scoring Protocol}
\label{subsec:zk-quality-scoring}

RAG systems depend on the quality of their documents. Public expert ratings, however, expose validators to bribery, coercion, and retaliation. We therefore require a scoring protocol that is verifiable without being public.

The zero-knowledge scoring protocol operates over the candidate set $D_j$ on chain $j$ for a given round. It enforces three properties simultaneously: only registered validators may vote; each validator may score a given document at most once per round; and every accepted score lies in $[0, S_{\max}]$. None of these properties requires disclosing the voter's identity or vote value.

The protocol's output is a document-level weight $w_t$, bound to a stable document identifier. At query time, the service discloses the weights for retrieved documents and binds them into a score hash $s_j$ together with the candidate-list hash $l_j$; no rescoring occurs on the query path.

Once a vote is proven valid, an MPC layer sums the hidden scores (Section~\ref{subsec:agg-circuit}). This does not replace the zero-knowledge layer; it ensures the intermediate values used in tallying are never exposed to a single party in plaintext.

\subsubsection{Participants}

The protocol involves three roles.

\begin{itemize}
    \item \textbf{Validators} are domain experts who score documents. Each holds a secret key $sk_i$ and is listed in a registry, proving membership via a Merkle tree rather than revealing identity.
    \item \textbf{Data Chains} govern a domain corpus. They verify proofs, maintain tally records, and reject duplicate or malformed submissions, without ever observing a score or validator identity in plaintext.
    \item \textbf{Committee nodes} jointly aggregate hidden scores under a threshold scheme. The prototype uses $n=4$ nodes with reconstruction threshold $t=3$.
\end{itemize}

\subsubsection{Cryptographic Building Blocks}

Three primitives support the protocol.

\textbf{Pedersen commitments} hide the score. A validator commits to score $s_{i,t}$ with blinding factor $r_{i,t}$:
\[
\mathsf{Com}_{i,t} = g^{s_{i,t}} h^{r_{i,t}} \bmod p.
\]
The commitment reveals nothing about $s_{i,t}$ and cannot later be altered. Its multiplicative structure allows many commitments to be combined into a sum without opening individual scores, which is essential for tallying.

\textbf{Merkle membership} proves registration without identifying the validator. All validators are committed into a tree with root $\mathsf{Root}_{reg}$; a validator supplies an authentication path as part of the proof.

\textbf{Nullifiers} prevent duplicate voting. Each validator computes
\[
\mathsf{Null}_{i,t} = \mathsf{Poseidon}(sk_i, \mathsf{rid}, \text{docId}_t),
\]
bound to their key, the round $\mathsf{rid}$, and the document. The nullifier is published on-chain, so a repeated vote on the same document in the same round is detected, without revealing which validator cast it.

Each vote additionally carries a succinct zk-SNARK (Groth16 or PLONK) attesting that all constraints above hold, at constant proof size and low verification cost.

\subsubsection{Formal Relation}

For validator $v_i$ scoring document $\text{docId}_t$ on chain $j$, the public statement is
\[
x_{i,t} = (\mathsf{Root}_{reg},\ \mathsf{Com}_{i,t},\ \mathsf{Null}_{i,t},\ \mathsf{rid},\ \text{docId}_t,\ S_{\max}),
\]
and the private witness is
\[
w_{i,t} = (s_{i,t},\ r_{i,t},\ sk_i,\ \mathsf{MerklePath}_i).
\]

The relation $\mathcal{R}_{\mathrm{score}}$ holds iff the following four constraints are satisfied:
\begin{align}
\mathsf{Root}_{reg} &= \mathsf{MerkleRoot}(\ell_i, \mathsf{MerklePath}_i) \label{eq:membership}\\
\mathsf{Null}_{i,t} &= \mathsf{Poseidon}(sk_i, \mathsf{rid}, \text{docId}_t) \label{eq:uniqueness}\\
\mathsf{Com}_{i,t} &= g^{s_{i,t}} h^{r_{i,t}} \bmod p \label{eq:commitment}\\
0 &\leq s_{i,t} \leq S_{\max} \label{eq:range}
\end{align}
Constraint~\eqref{eq:membership} enforces membership, \eqref{eq:uniqueness} enforces vote uniqueness, \eqref{eq:commitment} enforces commitment correctness, and \eqref{eq:range} enforces range validity.

The validator generates $\pi_{i,t} \leftarrow \mathsf{Prove}(pk_{\mathrm{score}}, x_{i,t}, w_{i,t})$ and submits $(\mathsf{Com}_{i,t}, \mathsf{Null}_{i,t}, \pi_{i,t})$. The Data Chain accepts the submission only if $\mathsf{Verify}(vk_{\mathrm{score}}, x_{i,t}, \pi_{i,t}) = 1$.

\subsection{Credibility and Cross-Chain Binding}
\label{subsec:agg-circuit}

Once each chain has finalized its document-level scores, the protocol discloses the subset relevant to the current candidate set and binds these per-chain outputs together, without a recursive global proof.

The chain-local inputs to this step come from secure MPC aggregation: committee-side secure sum strengthens the local tally, while cross-chain coordination reduces to hash binding over the published outputs. The resulting aggregate is checked against the accumulated Pedersen commitment product before the chain derives per-document scores and publishes the records later consumed at retrieval.

\subsubsection{Chain-Local Tally and Weight Computation}

For each chain $j \in \mathcal{C}$, let
\[
l_j = H(D_j)
\]
bind the candidate set for the current round. For each document $d_t \in D_j$, the committee computes the hidden tally
\[
S_t^{(j)} = \sum_i s_{i,t}^{(j)},
\]
and the aggregated commitment
\[
\mathsf{ComTally}_t^{(j)} = \prod_i \mathsf{Com}_{i,t}^{(j)} = g^{S_t^{(j)}} h^{R_t^{(j)}} \bmod p.
\]
The credibility value for each candidate is
\[
w_t^{(j)} = \frac{S_t^{(j)}}{n_j \cdot S_{\max}},
\]
where $n_j$ is the validator count for chain $j$. This keeps $w_t^{(j)} \in [0,1]$ and makes weights comparable across documents and chains. The chain then forms
\[
\mathcal{W}_j = \{(\text{docId}_t, w_t^{(j)})\}_{d_t \in D_j}, \qquad s_j = H(\mathcal{W}_j).
\]

\subsubsection{Chain-Local Verifiable Relation}

Each chain exposes enough finalized state for a verifier to check its local tallies and derived weights. In the prototype, this consists of finalized tally records, aggregate openings, and contract-verifiable checks. The relation $\mathcal{R}_{\mathrm{all}}^{(j)}$ holds iff:
\begin{align}
&\text{all accepted votes satisfy } \mathcal{R}_{\mathrm{score}}, \label{eq:votes-valid}\\
&\mathsf{ComTally}_t^{(j)} = \textstyle\prod_i \mathsf{Com}_{i,t}^{(j)}, \quad \forall d_t \in D_j, \label{eq:tally-commit}\\
&S_t^{(j)} = \textstyle\sum_i s_{i,t}^{(j)}, \quad \forall d_t \in D_j, \label{eq:tally-sum}\\
&w_t^{(j)} = \frac{S_t^{(j)}}{n_j \cdot S_{\max}}, \quad \forall d_t \in D_j, \label{eq:weight-def}\\
&s_j = H(\mathcal{W}_j). \label{eq:scorehash-def}
\end{align}
Thus the disclosed chain-local state certifies both the validity of the underlying votes and the correctness of the derived weights, for the exact candidate set hashed into $l_j$.

\subsubsection{Binding Layer and RAG Integration}

The Aggregator stores only binding hashes:
\begin{align*}
h_j &= H(s_j \,\|\, l_j), \\
G &= H(h_1 \,\|\, h_2 \,\|\, \cdots \,\|\, h_m).
\end{align*}
At retrieval, the service obtains $\{\mathcal{W}_j\}_j$, $\{s_j\}_j$, and $\{l_j\}_j$ from the Data Chains, and $G$ from the Aggregator. Any client can then verify the supporting chain-local state, recompute $h_j$ and $G$, and replay the ranking procedure. The Aggregator thus acts only as a binding layer, preventing substitution or omission relative to the declared chain set $\mathcal{C}$, without introducing a recursive proof.

\subsection{Trust-Aware Retrieval and Verifiable Response Generation}
\label{subsec:trust-aware-rag}

Once each chain has published $(\mathcal{W}_j, s_j, l_j)$ along with its supporting on-chain state, the RAG Service Layer performs provenance-aware retrieval and generates a grounded, verifiable response. Unlike conventional RAG, which treats all retrieved documents as equally trustworthy, our system folds cryptographically verified credibility into ranking: semantic relevance and provenance jointly determine which documents shape the output.

Algorithm~\ref{alg:trust-aware-rag} gives the steps.

\begin{algorithm}[t]
\caption{Trust Retrieval and Verifiable Response Generation}
\label{alg:trust-aware-rag}
\begin{algorithmic}[1]
\Require Query $Q$, round id $\mathsf{rid}$, chain set $\mathcal{C}$, proof interfaces for all Data Chains, binding digest $G$
\Ensure Response $R$ and proof package $\Pi$
\State $\mathbf{q} \gets f(Q)$
\State $\{D_j\}_{j \in \mathcal{C}} \gets \textsc{SemanticRetrieveByChain}(\mathbf{q})$
\ForAll{$j \in \mathcal{C}$}
    \State $l_j \gets H(D_j)$
    \State $(\mathcal{W}_j, s_j) \gets \textsc{GetLocalState}(j, \mathsf{rid}, D_j)$
    \If{$\textsc{VerifyLocalState}(j, \mathsf{rid}, l_j, s_j, \mathcal{W}_j) = 0$}
        \State \textbf{abort} with \textsc{InvalidLocalState}
    \EndIf
    \State $h_j \gets H(s_j \,\|\, l_j)$
\EndFor
\If{$H(h_1 \,\|\, \cdots \,\|\, h_m) \neq G$}
    \State \textbf{abort} with \textsc{BindingMismatch}
\EndIf
\ForAll{$d_i^{(j)} \in \bigcup_j D_j$}
    \State $\mathrm{sim}_i \gets \mathrm{sim}(Q, d_i^{(j)})$
    \State $\mathrm{score}_i \gets \alpha \cdot \mathrm{sim}_i + \beta \cdot w_i^{(j)}$
\EndFor
\State $D^{\star} \gets \textsc{Top-}k(\mathrm{score}_i)$
\State $\mathsf{ctx} \gets Q \,\|\, \mathcal{C} \,\|\, \{s_j\}_j \,\|\, \{l_j\}_j \,\|\, D^{\star}$
\State $r \gets H(\mathsf{ctx})$
\State $R \gets G(Q, D^{\star})$
\State $h_R \gets H(R)$
\State $\Pi \gets \big(\mathsf{rid},\ \mathcal{C},\ \{\mathcal{W}_j\}_j,\ \{s_j\}_j,$
\Statex \hspace{4em} $\{l_j\}_j,\ G,\ r,\ h_R\big)$
\State \textsc{EmitResponse}$(R, \Pi)$
\end{algorithmic}
\end{algorithm}

Each returned answer is thus conditioned only on documents whose provenance and credibility were independently verified, and comes with a replayable certificate any client can check.

\section{Security and Privacy Analysis}
\label{sec:security-analysis}

We analyze the security and privacy of TrustRAG, showing that the protocol achieves data integrity, vote privacy, trust-metadata integrity, retrieval verifiability, chain-level consistency, vote uniqueness, cross-chain binding integrity, retrieval atomicity, and liveness, under standard cryptographic assumptions.

\subsection{Assumptions}

\begin{itemize}
    \item \textbf{A1. Cryptographic hardness.} The hash function is collision resistant, the commitment scheme is computationally binding and hiding, and the zero-knowledge proof system is complete, sound, and zero-knowledge.
    \item \textbf{A2. Honest-threshold trust domains.} In each chain, the fraction of corrupted validators and committee nodes stays below the threshold required by the local voting and aggregation procedures.
    \item \textbf{A3. Declared chain set and deterministic ordering.} For each query, all honest parties agree on the ordered chain set $\mathcal{C}$ against which completeness and binding are evaluated.
    \item \textbf{A4. Deterministic verification logic.} Given the same public input, all honest contracts and verifiers return the same result for vote proofs, chain-local state checks, binding checks, and replay checks.
    \item \textbf{A5. Eventual delivery.} Messages among honest parties and access to required chain state are eventually available.
\end{itemize}

\subsection{Proofs}

\begin{theorem}[Data Integrity]
For any document $\text{docId}_t \in D_j$, if a Data Chain accepts an aggregate opening $(S_t,R_t)$ and its tally commitment $\mathsf{ComTally}_t$, the accepted tally is consistent with the exact set of vote commitments recorded on-chain for $\text{docId}_t$.
\end{theorem}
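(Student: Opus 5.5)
The argument is a reduction to the computational binding of Pedersen commitments (assumption A1), combined with the deterministic acceptance rule of the Data Chain (A4). The first step is to make precise what the chain checks when it accepts an aggregate opening. Let $V_t$ be the set of vote submissions $(\mathsf{Com}_{i,t},\mathsf{Null}_{i,t},\pi_{i,t})$ recorded on-chain for $\text{docId}_t$ in round $\mathsf{rid}$. Each such submission passed $\mathsf{Verify}(vk_{\mathrm{score}},x_{i,t},\pi_{i,t})=1$ and carries a fresh nullifier.

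By \eqref{eq:tally-commit}, the chain does not take $\mathsf{ComTally}_t$ from the committee. It recomputes $\mathsf{ComTally}_t=\prod_{i\in V_t}\mathsf{Com}_{i,t}$ from the on-chain records. It accepts $(S_t,R_t)$ only if $g^{S_t}h^{R_t}\equiv\mathsf{ComTally}_t \pmod p$. By A4 every honest verifier reaches the same verdict. So acceptance means that $(S_t,R_t)$ is a valid opening of the product of exactly the recorded commitments. Because nullifiers are unique, no vote is counted twice, and every recorded vote enters the product.

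The second step uses the homomorphism. Soundness of the SNARK (A1), applied to constraint \eqref{eq:commitment}, gives that each accepted $\mathsf{Com}_{i,t}$ is of the form $g^{s_{i,t}}h^{r_{i,t}}$ for some extractable $(s_{i,t},r_{i,t})$ with $0\le s_{i,t}\le S_{\max}$. Multiplying these gives a canonical opening of the same product, namely $(\sum_i s_{i,t},\sum_i r_{i,t})$. This opening is consistent with \eqref{eq:tally-sum} and with the recorded set.

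The third step is the contradiction. Suppose an adversary, possibly controlling committee nodes or deviating from the MPC, causes acceptance of some $(S_t,R_t)$ with $S_t\neq\sum_i s_{i,t}$, exponents taken modulo the group order. Then we hold two distinct openings of one commitment. This yields $\log_g h$, contradicting binding.

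I expect the reduction to be the main obstacle. It needs the extracted per-vote openings, so we must invoke knowledge soundness of Groth16/PLONK and not just plain soundness. We must also argue that the extractor can be run for every accepted vote, which may number polynomially many, inside a single reduction. I would first try extracting only the aggregate: sum the extracted witnesses and compare them to the adversary's $(S_t,R_t)$. The rest of the argument is bookkeeping.

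A remaining point is the range constraint \eqref{eq:range} together with $n_j S_{\max}$ being smaller than the group order. These ensure the sum does not wrap around, so equality modulo the group order implies equality over the integers. That in turn makes $w_t$ in \eqref{eq:weight-def} well defined.
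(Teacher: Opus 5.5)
Your proposal is correct and follows essentially the paper's argument. Acceptance forces $g^{S_t}h^{R_t}=\prod_i \mathsf{Com}_{i,t}$ over exactly the immutably recorded commitments, any other accepted tally would give a second opening of that product and so contradict Pedersen binding, and proof verification covers vote validity.

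Your insistence on knowledge soundness rather than the plain soundness of A1 is a genuine sharpening the paper's proof skips, though it means assuming more than A1 states. Plain soundness is nearly vacuous here, since every subgroup element equals $g^{s}h^{r}$ for any $s\in[0,S_{\max}]$ and a suitable $r$. Separately, your no-wraparound remark also needs the chain to reject a non-canonical submitted $S_t$ at or above the group order.
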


\textit{Proof.} Each vote commitment is recorded on-chain only after its zero-knowledge proof is verified. The submitted aggregate is accepted only if the chain verifies the Pedersen consistency equation $g^{S_t}h^{R_t} = \prod_i \mathsf{Com}_{i,t}$ together with vote-count and authorization checks. To make the chain accept a different tally, an adversary would need to alter the finalized commitment set, produce a second valid opening for the same commitment product, or bypass verification -- contradicting chain immutability, computational binding, and soundness, respectively. \hfill $\square$

\begin{theorem}[Vote Privacy]
For any honest validator, the adversary learns nothing about the private vote value $s_{i,t}$ beyond what the public protocol output reveals.
\end{theorem}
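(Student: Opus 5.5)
The argument is a simulation (hybrid) argument. For an honest validator $v_i$ voting on $\text{docId}_t$, we list everything the adversary sees that depends on $s_{i,t}$. From the public chain this is the commitment $\mathsf{Com}_{i,t}$, the nullifier $\mathsf{Null}_{i,t}$ and the proof $\pi_{i,t}$. From the committee it is the view of the corrupted nodes in the MPC secure sum, namely fewer than $t$ Shamir shares of $s_{i,t}$ (and of $r_{i,t}$, if blinding factors are shared) together with the protocol messages. Finally there are the released outputs $(S_t,R_t)$ and $w_t$, which form the permitted leakage. The goal is a simulator $\mathsf{Sim}$ that, given only the public output $(S_t, R_t, w_t)$ and the corrupted parties' inputs, produces a view indistinguishable from the real one. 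Then whatever the adversary can compute about $s_{i,t}$ it could equally compute from the output alone.

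The hybrids proceed in order.
\begin{enumerate}
\item Replace $\pi_{i,t}$ with a proof produced by the zero-knowledge simulator for $\mathcal{R}_{\mathrm{score}}$. By the zero-knowledge property in A1, this is indistinguishable.
\item Replace $\mathsf{Com}_{i,t}$ with a commitment to $0$. This uses the hiding of Pedersen commitments; with a random $r_{i,t}$ it is even perfectly hiding, provided $\log_g h$ is unknown. Identity privacy is handled in the same step: Merkle membership is carried inside the zero-knowledge proof, and $\mathsf{Null}_{i,t}$ is a Poseidon output on the secret $sk_i$, which we model as a random oracle/PRF, so it is pseudorandom and unlinkable across documents.
\item Replace the corrupted committee nodes' shares of $s_{i,t}$ with uniformly random field elements. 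Since A2 bounds the corruptions below the threshold $t$, fewer than $t$ Shamir shares are perfectly independent of the secret.
\item Simulate the remaining MPC transcript, conditioned on the output, using the simulator of the underlying MP-SPDZ secure-sum protocol, whose security under the threshold assumption we take as given.
\end{enumerate}

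The main obstacle is consistency between the simulated commitment and the released opening. The chain checks $g^{S_t}h^{R_t}=\prod_i \mathsf{Com}_{i,t}$, so after step 2 replaces $\mathsf{Com}_{i,t}$ the simulated view must still satisfy this equation. The first approach is to have $\mathsf{Sim}$ choose the honest validator's commitment last, as $g^{S_t}h^{R_t}\big/\prod_{i'\neq i}\mathsf{Com}_{i',t}$. This is a uniformly distributed group element, exactly as a genuine commitment with random blinding is, so the view stays consistent. This requires that the corrupted validators' commitments are fixed before the honest one, or that the simulator can extract their openings via knowledge-soundness of the SNARK.

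With several honest validators, the argument runs hybrid-by-hybrid, one honest validator at a time. Only the aggregate $S_t$ is revealed, so an adversary controlling all other validators can learn $s_{i,t}=S_t-\sum_{i'\neq i}s_{i',t}$. This is exactly the phrase "beyond what the public protocol output reveals," and it should be noted explicitly as inherent rather than a defect of the protocol.
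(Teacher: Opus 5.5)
Your argument is correct, and it takes a fuller route than the paper's. The paper's proof amounts to your first two hybrids in prose: a validator publishes only $(\mathsf{Com}_{i,t},\mathsf{Null}_{i,t},\pi_{i,t})$, the commitment is hiding, and the proof is zero-knowledge. It does not consider the corrupted committee nodes' Shamir shares or MPC transcript, and it says nothing about the nullifier. It also ignores that the aggregate opening $(S_t,R_t)$ is published and checked against $\prod_i\mathsf{Com}_{i,t}$. Once $R_t$ is public, stand-alone hiding of each commitment is no longer the right property, and this is exactly the obstacle you isolate.

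The paper's version is short and uses only A1, but strictly it covers only an observer of the per-vote on-chain artifacts. Yours covers the adversary described in the threat model, which can corrupt committee nodes and sees the released opening. The price is extra assumptions: A2 for the Shamir threshold, a simulatable secure-sum protocol, Poseidon as a PRF or random oracle, and knowledge soundness for input extraction. Your proof also makes explicit the inherent leakage $s_{i,t}=S_t-\sum_{i'\neq i}s_{i',t}$ when all other validators are corrupt, which the paper leaves implicit. Your ordering is also right: simulating $\pi_{i,t}$ before replacing the nullifier means the proof no longer depends on $sk_i$ when you invoke PRF security.

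Two small repairs are needed.

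\textbf{Hiding condition.} Pedersen hiding with uniform $r_{i,t}$ is perfect whenever $h$ generates the group, whether or not $\log_g h$ is known. The discrete-log assumption is what gives binding, not hiding.

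\textbf{Adaptive ordering.} If corrupted validators may commit after the honest one, knowledge-soundness extraction alone does not rescue the ``choose last'' trick. The honest commitment is already on-chain before $S_t$ is determined, and making it consistent with $(S_t,R_t)$ afterwards would require a discrete logarithm base $h$. The standard fix uses a trapdoor:
\begin{itemize}
\item The simulator generates the commitment key with trapdoor $x=\log_g h$.
\item For each honest $j$ it posts $\mathsf{Com}_{j,t}=g^{u_j}$ with $u_j$ uniform.
\item After extracting the openings of the corrupted set $\mathcal{K}$ and obtaining $S_t$, it sets $R_t=\sum_{i'\in\mathcal{K}}r_{i',t}+\bigl(\sum_j u_j-S_H\bigr)/x$, where $S_H=S_t-\sum_{i'\in\mathcal{K}}s_{i',t}$.
\end{itemize}
This reproduces the real joint distribution exactly, because in both worlds the honest commitments are uniform and independent, and $R_t$ is the unique value satisfying the Pedersen check. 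Accordingly, $R_t$ should be treated as simulated transcript, not as part of the output handed to the simulator.
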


\textit{Proof.} Each validator publishes only a commitment, a nullifier, and a zero-knowledge proof $\pi_{\text{vote}}$. The commitment is computationally hiding, so it conceals $(s_{i,t}, r_{i,t})$; the proof system is zero-knowledge, so $\pi_{\text{vote}}$ reveals nothing beyond the statement's truth. \hfill $\square$

\begin{theorem}[Trust-Metadata Integrity Under Poisoning]
An adversary cannot raise a document's finalized credibility score beyond what accepted votes imply, unless it corrupts the trust domain beyond the tolerated threshold or breaks the protocol's cryptographic assumptions.
\end{theorem}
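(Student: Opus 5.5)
The plan is a reduction: I will show that any increase of $w_t^{(j)}$ beyond the value implied by the accepted votes forces one of three events. Either the adversary forged an accepted vote, or it altered the tally, or it corrupted the trust domain beyond threshold. Each event is then ruled out by A1, A2, or the Data Integrity theorem.

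\textbf{Fixing the target.} Recall that the finalized credibility is $w_t^{(j)} = S_t^{(j)}/(n_j S_{\max})$, where $n_j$ and $S_{\max}$ are public and fixed. The weight is therefore a monotone function of the tally $S_t^{(j)}$ alone. It suffices to show that the accepted $S_t^{(j)}$ cannot exceed $\sum_i s_{i,t}^{(j)}$ taken over the honestly generated accepted votes plus the adversary's own accepted votes. In this sum, each corrupted validator contributes at most one vote per $(\mathsf{rid},\text{docId}_t)$, and each vote lies in $[0,S_{\max}]$.

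\textbf{Casework on how $S_t^{(j)}$ could be inflated.}
\begin{enumerate}
\item \emph{Extra votes from outsiders.} Any accepted vote carries a proof verifying against $\mathsf{Root}_{reg}$. By soundness (A1), the submitter knows a Merkle path for some registered leaf $\ell_i$ and the corresponding $sk_i$, so outsiders cannot vote. A Sybil adversary is limited to the registered validators it actually controls; this is bounded by A2.
\item \emph{Duplicate votes.} Soundness forces $\mathsf{Null}_{i,t}=\mathsf{Poseidon}(sk_i,\mathsf{rid},\text{docId}_t)$. The chain rejects repeated nullifiers, so one key yields at most one accepted vote per document and round. The argument also uses collision resistance of Poseidon: two distinct nullifiers for the same key, round and document would give a collision.
\item \emph{Out-of-range scores.} Constraint \eqref{eq:range} together with soundness bounds each committed value by $S_{\max}$. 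Binding of the Pedersen commitment prevents later opening to a larger value.
\item \emph{Tampering with the tally after acceptance.} By the Data Integrity theorem, an accepted opening $(S_t,R_t)$ is consistent with $\prod_i \mathsf{Com}_{i,t}$ over the recorded commitment set. By computational binding, $S_t$ equals the true sum of the committed values.
\item \emph{Corrupting the committee's secure sum.} With fewer than $t$ corrupted nodes (A2, here $t=3$ of $n=4$), Shamir reconstruction is determined by honest shares. In any case the output is cross-checked against $\mathsf{ComTally}_t$, so the preceding step already catches deviations.
\end{enumerate}

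\textbf{Query-time consumption.} Finally, the inflated value must not be injectable at query time. The weight consumed in ranking is the one hashed into $s_j=H(\mathcal{W}_j)$, which is checked by \textsc{VerifyLocalState} and bound through $h_j$ and $G$. Substituting a different $w_t$ therefore requires a hash collision, contradicting A1.

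\textbf{Expected obstacle.} The delicate step is the boundary of the claim: what \emph{accepted votes imply}. Corrupted but registered validators can legitimately vote $S_{\max}$. The statement must therefore be read as a bound of $S_{\max}$ per corrupted validator, with meaningful protection arising only when A2 limits their number. I would state this explicitly, and formalize the combination of cases as a hybrid argument: the adversary's success probability is bounded by the sum of the advantages against soundness, binding, collision resistance, and the threshold assumption. That sum is negligible.
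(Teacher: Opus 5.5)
Your proposal is correct and follows essentially the same route as the paper's proof, in more detail. Vote validity (membership, nullifier, commitment correctness, range) rules out injected, duplicate or out-of-range votes; the Pedersen consistency check rules out tampering with the tally; the $s_j$/$l_j$/$G$ binding rules out substituting weights at query time; and your caveat about corrupted-but-registered validators plays the role of the paper's remark that the guarantee only concerns faithfully reflecting accepted votes.

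Three precision fixes are needed:
\begin{enumerate}
\item The property you actually invoke is \emph{knowledge} soundness, not plain soundness. Pedersen is perfectly hiding, so every group element has an opening with $s\in[0,S_{\max}]$, which makes the range constraint vacuous under plain soundness. You must extract the in-range opening and then use binding.
\item The collision resistance needed against double voting is that of the leaf map $sk_i\mapsto\ell_i$, so that a registered leaf pins down one key. It is not that of the nullifier, which is simply deterministic.
\item In the committee case, fewer than $t$ corrupted nodes gives privacy but not honest-share-determined reconstruction: with $t=3$, $n=4$, two corruptions leave only two honest shares. That case therefore rests entirely on the $\mathsf{ComTally}_t$ check, as in the paper.
\end{enumerate}
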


\textit{Proof.} Inflating a score requires either injecting favorable votes or manipulating the tally or binding outcome. The first is blocked by vote validity: only registered validators can vote, nullifiers prevent duplicate votes, commitment correctness binds each vote to a concrete score, and range checks reject out-of-domain values. The second is blocked because a local aggregate is accepted only if it matches the accepted commitments, and $W_j$ is accepted only if consistent with the finalized tally for the candidate set hashed into $l_j$ and $s_j$; the global digest $G$ then prevents substituting $W_j$ or the candidate list undetected. Thus credibility can rise only through accepted votes, threshold corruption, or broken assumptions -- this does not claim honest validators always score poisoned content correctly, only that published metadata faithfully reflects the accepted scoring process. \hfill $\square$

\begin{theorem}[Retrieval Verifiability]
For any response returned with $\Pi = (\mathcal{C}, \{W_j\}_j, \{s_j\}_j, \{l_j\}_j, G, r, H(R))$, any client can efficiently verify that the credibility values and final ranking originated from accepted chain state.
\end{theorem}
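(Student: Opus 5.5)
The plan is to describe an explicit verification procedure $\mathsf{Verify}(Q,R,\Pi)$ that a client runs. Then argue two things: it is efficient (completeness for honest runs), and any accepting transcript forces the credibility values and ranking to come from accepted chain state (soundness). Soundness will be shown by reduction to assumption A1 (collision resistance, commitment binding, SNARK soundness), using A3 and A4 for a well-defined declared chain set and deterministic checks. The procedure mirrors Algorithm~\ref{alg:trust-aware-rag} in reverse, in four steps.

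\emph{(i) Chain-local check.} For each $j\in\mathcal{C}$, recompute $s_j' = H(W_j)$ and check $s_j'=s_j$. Then fetch the finalized tally records for each $\text{docId}_t$ in $W_j$ from chain $j$. Verify that each $w_t^{(j)}$ equals $S_t^{(j)}/(n_j S_{\max})$, where $S_t^{(j)}$ comes from an accepted opening, i.e.\ relation $\mathcal{R}_{\mathrm{all}}^{(j)}$, constraints \eqref{eq:votes-valid}--\eqref{eq:scorehash-def}.

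\emph{(ii) Binding check.} Recompute $h_j=H(s_j\|l_j)$ in the canonical order of $\mathcal{C}$. Check that $H(h_1\|\cdots\|h_m)=G$, with $G$ read from the Aggregator's record rather than trusted from $\Pi$.

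\emph{(iii) Replay.} Recompute $\mathrm{score}_i=\alpha\,\mathrm{sim}(Q,d_i)+\beta w_i$ over the candidates committed by $l_j$. Recompute $\textsc{Top-}k$ with a fixed tie-breaking rule, and check $r=H(Q\|\mathcal{C}\|\{s_j\}\|\{l_j\}\|D^\star)$.

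\emph{(iv) Response.} Check $H(R)=h_R$.

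Efficiency follows because each step is a linear number of hash evaluations, constant-size SNARK or Pedersen checks per document, and a sort over $|\bigcup_j D_j|$ candidates. None of this depends on the number of past queries.

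For soundness, suppose an adversarial service outputs $(R,\Pi)$ that passes verification, yet some $w_t^{(j)}$ in $\Pi$ differs from the finalized chain value, or the ranking differs from the replay. Split into cases.

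\emph{Case (a): $W_j$ is altered but $s_j$ still matches the on-chain or bound value.} This yields an $H$-collision.

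\emph{Case (b): $s_j$ or $l_j$ is altered, or a chain is omitted or reordered, while $G$ still matches the Aggregator's record.} This yields a collision in $h_j$ or in $G$. Omission is detected because A3 fixes $m$ and the order.

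\emph{Case (c): the finalized tally itself is inconsistent with the votes.} This is excluded by the Data Integrity theorem. The accepted opening matches $\prod_i\mathsf{Com}_{i,t}$, and each commitment comes from a vote accepted under SNARK soundness.

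\emph{Case (d): the ranking or $D^\star$ disagrees with the replay.} Replay is deterministic (A4), so a mismatch is detected directly. If the replay matches but $r$ binds a different context, this is again a collision.

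Combining the four cases, a successful forger breaks A1 with non-negligible probability.

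I expect the main obstacle to be grounding the verifier's reference values. $\Pi$ carries $s_j$ and $G$ from the service itself, so the argument must insist that the client cross-checks against the chain's finalized records and the Aggregator's stored $G$; without that, the hashes bind nothing. A second difficulty is that the replay needs the candidate documents themselves, not only $l_j$, to recompute $\mathrm{sim}$. I would handle this by having the client fetch $D_j$ (or their registered content and embedding hashes) and check them against $l_j$ and the registration tuples. I would also state explicitly that only the proof references, not the plaintext votes, are needed, so the check is consistent with the Vote Privacy theorem.
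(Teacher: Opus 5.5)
Your proposal is correct and follows the paper's own argument. The paper's proof has the client check each chain's published $(W_j,s_j,l_j)$ against the finalized commitments and tallies, recompute $h_j$ and $G$, replay the top-$k$ selection against $r$, and check the response hash, so that any tampering makes one of these checks fail; your case-by-case reduction to A1, and your point that $G$ and the candidate documents must be taken from the Aggregator and chain records rather than from $\Pi$ itself, spell out details that the paper's one-paragraph proof leaves implicit.
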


\textit{Proof.} The client fetches finalized chain state for all $j \in \mathcal{C}$ and checks that each published $(W_j, s_j, l_j)$ is consistent with accepted local commitments and tallies (soundness of vote proofs, A4). It recomputes $h_j = H(s_j \,\|\, l_j)$ and $G' = H(h_1 \,\|\, \cdots \,\|\, h_m)$, checking $G' = G$. Using the disclosed candidates, chain set, and weights, it recomputes ranking scores, replays the top-$k$ selection, checks the result against $r$, and verifies the response hash. If the service tampers with scores, substitutes candidates, omits a chain, reranks incorrectly, or alters the response, one of these checks fails. \hfill $\square$

\begin{theorem}[Chain-Level Consistency]
If an honest party accepts a finalized state $S$ for a round, no other honest party accepts a conflicting state $S' \neq S$ for that round.
\end{theorem}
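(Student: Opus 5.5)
The argument reduces the theorem to the safety guarantee of each chain's consensus together with assumptions A1, A2 and A4. We write the finalized state as $S=(\{\mathsf{Com}_{i,t}\},\{\mathsf{Null}_{i,t}\},\{(S_t,R_t)\},\mathcal{W}_j,s_j)$ for round $\mathsf{rid}$. The proof separates two ways $S'\neq S$ could arise for the same round: (i) the two honest parties disagree about which ledger entries were finalized, or (ii) they agree on the finalized ledger but derive different state from it. We rule out each case in turn.

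\emph{Step 1 (same ledger).} Under A2, each Data Chain runs a consensus protocol whose honest participants exceed the safety threshold. Standard BFT safety then says that two honest parties never finalize conflicting blocks for the same height or round. We take this as part of the honest-threshold assumption rather than prove it. Consequently, both honest parties see the same ordered sequence of accepted vote submissions $(\mathsf{Com}_{i,t},\mathsf{Null}_{i,t},\pi_{i,t})$ and the same submitted aggregate openings.

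\emph{Step 2 (same derived state).} Given identical ledger contents, A4 guarantees that proof verification, nullifier-duplicate rejection, and the Pedersen check $g^{S_t}h^{R_t}=\prod_i\mathsf{Com}_{i,t}$ return the same verdicts for every honest verifier. The accepted commitment set is therefore identical, and so is $\mathsf{ComTally}_t$. If two different openings $(S_t,R_t)\neq(S'_t,R'_t)$ were both accepted against the same product, this would break computational binding (A1), exactly as in the Data Integrity theorem. Hence $S_t$ is unique. By \eqref{eq:weight-def}, $w_t$ is a deterministic function of $S_t$, so it is unique as well.

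\emph{Step 3 (digest).} The score hash $s_j=H(\mathcal{W}_j)$ is then the same for both parties. If the parties were comparing only published digests, two distinct $\mathcal{W}_j$ with equal $s_j$ would give a hash collision, contradicting A1. Together with Steps 1 and 2, this yields $S'=S$, which completes the argument.

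The main obstacle is Step 1. The paper never specifies the consensus protocol, so "no conflicting finalization" must be imported as an assumption rather than derived. A2 phrases the honest threshold in terms of voting and aggregation, not ledger consensus. We would therefore state explicitly that A2 is read as including the safety threshold of each chain's consensus. We would also note that without this reading, for example under a fork in a longest-chain protocol, the theorem holds only for finalized, not tentative, state.
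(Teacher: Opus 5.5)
Your proposal is correct and follows essentially the paper's argument: agreement on finalized inputs, deterministic and sound verification plus commitment binding for the derived tally, and collision resistance for the digests. The paper's own proof relies on the same finality you import in Step 1 (``violating finality''), and that assumption is already supplied by the threat model's honest majority in each consensus domain, so you need not stretch A2; the paper also carries the argument to the Aggregator digest $G$, which follows at once from your Step 3 because $G$ is a deterministic function of the ordered $h_j = H(s_j \,\|\, l_j)$, so a conflicting $G$ would require a hash collision.
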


\textit{Proof.} Local acceptance requires a verified proof and commitment tuple; since verification is deterministic and sound, two conflicting local states cannot both be accepted. At the binding layer, $G$ is a deterministic function of the ordered $h_j$'s, so a conflicting global binding would require forging a local proof, finding a hash collision, or violating finality. \hfill $\square$

\begin{theorem}[Local Vote Validity and Uniqueness]
If a commitment $\mathsf{Com}_{i,t}$ is accepted for document $\text{docId}_t$, it came from a registered validator, is bound to a score within range, and that validator cannot cast a second valid vote for the same document.
\end{theorem}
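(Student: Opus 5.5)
The plan is to argue directly from the acceptance rule of the Data Chain together with knowledge soundness of the zk-SNARK (A1). A commitment $\mathsf{Com}_{i,t}$ is recorded for $\text{docId}_t$ only after the contract checks $\mathsf{Verify}(vk_{\mathrm{score}}, x_{i,t}, \pi_{i,t}) = 1$ on the public statement $x_{i,t} = (\mathsf{Root}_{reg}, \mathsf{Com}_{i,t}, \mathsf{Null}_{i,t}, \mathsf{rid}, \text{docId}_t, S_{\max})$. By A4 every honest verifier reaches the same verdict. By knowledge soundness, except with negligible probability there is an extractor that outputs a witness $(s_{i,t}, r_{i,t}, sk_i, \mathsf{MerklePath}_i)$ satisfying all four constraints \eqref{eq:membership}--\eqref{eq:range}. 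I will read each claim of the statement off one of these constraints.

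\emph{Registration.} Constraint \eqref{eq:membership} gives a leaf $\ell_i$ and a path hashing to $\mathsf{Root}_{reg}$. If $\ell_i$ were not a registered leaf, we would obtain two distinct preimages along the path to the same root, which is a collision for the tree hash. This contradicts A1. To tie $\ell_i$ to the extracted key $sk_i$, I will assume the leaf is derived from $sk_i$ (for example $\ell_i = \mathsf{Poseidon}(sk_i)$) and is checked inside the circuit. I will make this assumption explicit, since the relation as written leaves it implicit.

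\emph{Range.} Constraint \eqref{eq:commitment} says $\mathsf{Com}_{i,t}$ opens to $s_{i,t}$, and \eqref{eq:range} gives $0 \le s_{i,t} \le S_{\max}$. Computational binding of the Pedersen commitment then rules out any later opening of the same $\mathsf{Com}_{i,t}$ to an out-of-range value. This is the step that connects this theorem to the Data Integrity theorem.

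\emph{Uniqueness.} The step I expect to be the main obstacle is showing that a second accepted vote by the same validator on the same document in the same round must carry the same nullifier. By \eqref{eq:uniqueness} both extracted witnesses satisfy $\mathsf{Null} = \mathsf{Poseidon}(sk_i, \mathsf{rid}, \text{docId}_t)$ with identical inputs. Determinism of Poseidon then forces equal nullifiers, and the contract rejects any nullifier already stored. This covers an honest key holder.

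The harder case is an adversary who proves with a different key $sk' \neq sk_i$ that opens to the same registered leaf $\ell_i$. This would yield a fresh nullifier, and the extracted witness would have $\mathsf{Poseidon}(sk') = \mathsf{Poseidon}(sk_i)$. I will rule this out by collision resistance of the leaf derivation (A1). For a corrupted validator, which A2 allows, I will also argue that the key is bound to exactly one leaf, so the nullifier map from a validator to its nullifier is injective per $(\mathsf{rid}, \text{docId}_t)$.

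Finally, I will note that uniqueness is scoped to a single round $\mathsf{rid}$, since the nullifier includes $\mathsf{rid}$, and state the theorem's guarantee relative to that round.
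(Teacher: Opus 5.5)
Your proposal is correct and follows the paper's route: soundness of the vote SNARK forces all four constraints of $\mathcal{R}_{\mathrm{score}}$, and membership, range and nullifier-based uniqueness are read off from them. It is also more careful than the paper in three ways that matter:
\begin{itemize}
\item Knowledge soundness plus Pedersen binding is genuinely needed. Plain soundness says nothing about the commitment/range constraint, because a perfectly hiding Pedersen commitment can be opened to any value.
\item The leaf--key link $\ell_i = \mathsf{Poseidon}(sk_i)$ does have to be made explicit, since $\ell_i$ appears in neither the witness nor the public statement.
\item Scoping uniqueness to a single round $\mathsf{rid}$ is right. The paper's proof silently drops $\mathsf{rid}$ from the nullifier, which contradicts its own relation.
\end{itemize}
One small correction: uniqueness needs the direction ``a registered leaf determines a unique key,'' and hence a unique nullifier per $(\mathsf{rid}, \text{docId}_t)$. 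You already get this from collision resistance of the leaf derivation. Your closing sentence states the opposite direction (``the key is bound to exactly one leaf''), which holds trivially, and calls the validator-to-nullifier map ``injective,'' which is not the property this claim relies on.
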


\textit{Proof.} The vote circuit enforces membership, uniqueness (via nullifier $\mathsf{Null}_{i,t} = \mathsf{Poseidon}(sk_i,\text{docId}_t)$), commitment correctness, and range validity simultaneously. By soundness, any accepted proof implies all four hold; nullifier reuse blocks double voting, and the range check blocks malformed scores. \hfill $\square$

\begin{theorem}[Cross-Chain Binding Integrity]
If the Aggregator publishes $G$ over the declared chain set $\mathcal{C}$, every $(s_j, l_j)$ pair is bound to a unique $h_j$, and any change to a score, candidate set, or chain membership is detectable.
\end{theorem}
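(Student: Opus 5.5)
The argument is a reduction to collision resistance of $H$ (assumption A1), combined with the agreed canonical ordering of $\mathcal{C}$ (assumption A3).

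\emph{Uniqueness of each $h_j$.} Since $h_j = H(s_j \,\|\, l_j)$ is a deterministic function, every pair $(s_j, l_j)$ yields exactly one $h_j$. Conversely, suppose two distinct pairs $(s_j,l_j) \neq (s_j',l_j')$ produced the same $h_j$. Then $s_j\|l_j$ and $s_j'\|l_j'$ would be distinct inputs with equal hash values, which is a collision. For this step to go through, the concatenation must be injective. I would make that explicit by fixing the lengths: $s_j$ and $l_j$ are themselves hash outputs of a fixed size $\lambda$. With fixed-length components, distinct pairs always give distinct strings, so a repeated $h_j$ is a genuine collision.

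\emph{Detectability of changes.} I would split this into three cases, each ending in a check that fails with overwhelming probability.

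\textbf{(i) A score is changed.} Altering a weight in $\mathcal{W}_j$ gives some $\mathcal{W}_j' \neq \mathcal{W}_j$. Either the service reports the old $s_j$, in which case the client recomputes $H(\mathcal{W}_j') \neq s_j$ (unless there is a collision) and the check against finalized chain state fails, as in Retrieval Verifiability. Or the service reports a new $s_j'$, which moves us to case (iii).

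\textbf{(ii) The candidate set is changed.} Replacing $D_j$ by some $D_j' \neq D_j$ gives $l_j' \neq l_j$ by the same collision argument, and this again reduces to case (iii).

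\textbf{(iii) One pair is changed.} Suppose some pair changes, $(s_j',l_j') \neq (s_j,l_j)$. By the uniqueness step, $h_j' \neq h_j$. Now suppose the recomputed global digest still equals $G$. Then $h_1\|\cdots\|h_m$ and the modified sequence are distinct strings hashing to the same value, which is a collision.

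\textbf{Chain membership changes.} Dropping, adding, or reordering chains in $\mathcal{C}$ changes either the number $m$ of blocks or their order. Because each $h_j$ has fixed length, the concatenation encodes the ordered sequence $(h_1,\dots,h_m)$ injectively. Any such change therefore yields a different preimage of $G$, so a matching $G$ would again require a collision. A3 is what makes this meaningful: verifiers agree on the canonical order, so a reordering cannot be passed off as the honest order.

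\emph{Main obstacle.} The hard part is the injectivity of the encodings rather than the hash argument itself. If $m$ varies across rounds, or if $\mathcal{W}_j$ and $D_j$ are variable-length sets, naive concatenation can be ambiguous, and an "equal hash" might not be a true collision. I would handle this by explicitly assuming a canonical, prefix-free serialization: fixed-length digests, plus $m$, $\mathsf{rid}$, and $\mathcal{C}$ included as a length or identifier prefix. I would state this as part of A1/A3. With that in place, every case reduces to an adversary that outputs an $H$-collision, and such an adversary succeeds with only negligible probability.
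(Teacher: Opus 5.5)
Your proposal is correct and follows essentially the same route as the paper: collision resistance of $H$ makes $h_j = H(s_j \,\|\, l_j)$ binding for the pair $(s_j,l_j)$, and hashing the canonically ordered sequence $h_1\,\|\,\cdots\,\|\,h_m$ means that omitting, adding, or reordering chains changes $G$. Your explicit handling of injective, fixed-length encodings (so that an equal digest really is a collision) and of the case where a stale $s_j$ is reported fills in details the paper's two-line proof leaves implicit, without changing the argument.
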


\textit{Proof.} The Aggregator computes $h_j = H(s_j \,\|\, l_j)$ for each chain in $\mathcal{C}$ and $G$ over that ordered sequence. Replacing a score or candidate-list hash changes $h_j$ except with negligible probability under collision resistance; omitting or reordering a chain likewise changes $G$. \hfill $\square$

\begin{theorem}[Retrieval Atomicity]
For any query $Q$, the response $R$ is generated only from a fully verified evidence set $D^\star$ with a valid $\Pi$, or no response is accepted.
\end{theorem}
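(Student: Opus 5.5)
The argument is an all-or-nothing analysis of Algorithm~\ref{alg:trust-aware-rag}, combined on the client side with the replay check from the Retrieval Verifiability theorem. First I fix what ``accepted'' means. A response $R$ counts as accepted only if it is emitted by \textsc{EmitResponse} and the client's verification of $\Pi$ succeeds. I will then show that each of these two gates can only be passed when $D^\star$ is fully verified. If either gate fails, there is no partial output: the algorithm aborts, or the client rejects.

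On the service side, I walk through the control flow of the algorithm. The call $R \gets G(Q, D^\star)$ on the generation line is reachable only if two things hold. First, for every $j \in \mathcal{C}$, \textsc{VerifyLocalState}$(j,\mathsf{rid},l_j,s_j,\mathcal{W}_j)$ returned $1$, since otherwise the loop aborts with \textsc{InvalidLocalState}. Second, the recomputed $H(h_1\|\cdots\|h_m)$ equals $G$, since otherwise the algorithm aborts with \textsc{BindingMismatch}. Only then is $D^\star$ computed, by the deterministic scoring rule over $\bigcup_j D_j$ and the weights in $\mathcal{W}_j$, and only then is $R$ generated. By A4 the verification predicates are deterministic. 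By the Chain-Level Consistency theorem and the Local Vote Validity theorem, a passing \textsc{VerifyLocalState} means $\mathcal{W}_j$ satisfies $\mathcal{R}_{\mathrm{all}}^{(j)}$ for the candidate set hashed into $l_j$. By the Cross-Chain Binding Integrity theorem, the $G$ check means no chain in $\mathcal{C}$ was omitted and no $(s_j,l_j)$ pair was substituted. Hence every element of $D^\star$ carries verified provenance and credibility.

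The main obstacle is that the service is untrusted (it may deviate from the protocol), so the control-flow argument alone proves nothing about a malicious service. I handle this by moving the guarantee to the client. Acceptance requires that the client independently reruns every check in the proof of the Retrieval Verifiability theorem: recompute $h_j$ and $G$, check $\mathcal{W}_j$ against finalized chain state, replay Top-$k$ to obtain $D^\star$, check that $H(Q\|\mathcal{C}\|\{s_j\}\|\{l_j\}\|D^\star)=r$, and check that $H(R)=h_R$. Suppose a service emits an $R$ built from some $D' \neq D^\star$ with $\Pi$ still accepted. Then one of the following must occur:
\begin{itemize}
\item the replay reproduces $D'$, which contradicts the determinism of the ranking rule given verified weights;
\item $r$ matches for two distinct contexts, which is a hash collision;
\item $h_R$ matches a different response, which is again a collision.
\end{itemize}
Each case contradicts A1 or A4, so such an $R$ is accepted only with negligible probability.

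Combining the two sides gives the dichotomy. Either all checks pass, in which case $R$ is bound via $h_R$ and $r$ to a fully verified $D^\star$ and $\Pi$ is valid, or some check fails and no response is accepted. This matches the ``Detectability, Not Availability'' assumption. I would also remark that A5 is needed only so the client can fetch the chain state used in the checks, not for atomicity itself.
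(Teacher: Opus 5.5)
Your service-side paragraph is essentially the paper's whole proof. The paper argues that the service waits for every $(W_j,s_j,l_j)$ and for $G$, and verifies local state and the global binding before ranking, computing $r$, and calling the generator. It adds that any invalid, missing, or inconsistent input makes local verification, the binding check, or the replay check fail, so no valid $\Pi$ can be formed. You correctly note that this only constrains a service that actually runs Algorithm~\ref{alg:trust-aware-rag}. Your client-side repair, however, has a genuine gap. The three cases are not exhaustive, because they implicitly assume that $r$ commits to the evidence actually fed to the generator, which is what you need to show.

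A deviating service can run Algorithm~\ref{alg:trust-aware-rag} faithfully through $r \gets H(\mathsf{ctx})$, so that $r$ commits to the correct verified $D^\star$. It can then call the generator on some other set $D'$ (or on no evidence at all), obtain $R'$, and publish $h_R = H(R')$. Every client check then passes. The replay reproduces $D^\star$ rather than $D'$, so your first case never arises. The replayed context hashes to $r$ because $r$ was formed honestly, so there is no collision. Finally, $H(R') = h_R$ holds exactly. The hash $h_R$ only authenticates whichever response the service chose to commit to, and nothing in $\Pi$ ties it to $D^\star$. The LLM call is neither replayed nor proven, so the client has no way to test that $R$ was generated from $D^\star$. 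Your closing claim that $R$ is bound via $h_R$ and $r$ to a fully verified $D^\star$ therefore does not follow.

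What the client-side checks do establish is weaker. They show that the evidence set committed in $r$ is the correctly replayed top-$k$ over verified and completely bound chain state, and that the received $R$ is the one committed in $\Pi$. The clause that $R$ is \emph{generated from} $D^\star$ has to rest on the service performing the generation step as specified. That is exactly what the paper's proof does: it argues from the service's own control flow, consistent with generation being out of scope in the threat model. So either drop the claim that client verification enforces the generation binding and state that step as an assumption, or add a mechanism the protocol lacks, such as a proof or attestation of inference binding $h_R$ to $D^\star$. A smaller point: a passing \textsc{VerifyLocalState} gives $\mathcal{R}^{(j)}_{\mathrm{all}}$ by soundness of that check together with Local Vote Validity and Data Integrity. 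Chain-Level Consistency, which only rules out two honest parties accepting conflicting states, is not the relevant lemma there.
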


\textit{Proof.} The service does not finalize generation on partial outputs; it waits until all required $(W_j, s_j, l_j)$ tuples are obtained and $G$ is published. Only after verifying local state and the global binding does it derive the ranking, compute $r$, and invoke the generator. If any required result is invalid, missing, or inconsistent, local verification, the binding check, or the replay check fails, and no valid $\Pi$ can be formed. \hfill $\square$

\begin{theorem}[Liveness]
Assume corrupted validators stay below threshold in each chain and messages between honest parties are eventually delivered. Then every valid query eventually yields either a finalized response with a valid $\Pi$, or an explicit rejection from proof failure, binding mismatch, or insufficient evidence.
\end{theorem}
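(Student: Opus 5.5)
The proof follows the stages of Algorithm~\ref{alg:trust-aware-rag}. I would show that each stage, started by an honest party, terminates in finite time with one of two outcomes: it continues to the next stage, or it aborts with an explicit error. Liveness then follows because there are finitely many stages, and every abort label is one of the rejections the statement permits.

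\textbf{Finalization precedes queries.} Scores are finalized ahead of query time, so I first argue that every document in a candidate set $D_j$ has a finalized record, or is explicitly marked as lacking one. That second case becomes the ``insufficient evidence'' rejection. Honest validators submit $(\mathsf{Com}_{i,t},\mathsf{Null}_{i,t},\pi_{i,t})$, and these messages are eventually delivered (A5). By completeness of the proof system (A1), every honest proof verifies, and A4 ensures all honest contracts agree on that outcome. Corrupted validators can submit invalid proofs, but these are rejected deterministically. They can stay silent, but the tally only sums accepted votes. So neither action blocks finalization. For the committee, A2 guarantees at least $t=3$ of the $n=4$ nodes are honest. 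Those nodes can reconstruct the MPC secure sum, produce the opening $(S_t,R_t)$, and satisfy the Pedersen check $g^{S_t}h^{R_t}=\prod_i \mathsf{Com}_{i,t}$ by correctness of the commitment scheme. The weight $w_t$ is then a deterministic function of the tally.

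\textbf{Query path.} Computing $l_j$, fetching $(\mathcal{W}_j,s_j)$ via eventual availability of chain state (A5), running \textsc{VerifyLocalState}, and computing $h_j$ and $G$ are each finite, deterministic computations over a finite declared set $\mathcal{C}$ (A3). Each check either passes or triggers the abort named in the algorithm: \textsc{InvalidLocalState}, corresponding to proof failure, or \textsc{BindingMismatch}, corresponding to binding mismatch. If every check passes, the ranking and Top-$k$ selection are finite computations. Generation $G(Q,D^\star)$ is assumed to terminate, and $\Pi$ is assembled directly. Its validity follows from Retrieval Verifiability and Retrieval Atomicity, which were established earlier.

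\textbf{Main obstacle.} The hardest step is the one above: ensuring that a corrupted minority cannot stall the pipeline indefinitely instead of causing an explicit abort. The two ways this could happen are withholding committee shares and the Aggregator never publishing $G$. For committee shares, I would use the threshold property: any $t$ honest shares suffice, so withholding by fewer than $n-t+1$ nodes does not block reconstruction. For the Aggregator, I would make explicit a timeout under partial synchrony, interpreting the ``eventual delivery'' of A5 as bounded eventually, after which the service emits ``insufficient evidence.'' Because the paper explicitly disclaims availability against a refusing service, I would scope the theorem to honest service execution and honest-majority domains, and state that restriction in the proof.
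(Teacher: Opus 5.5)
Your proposal is correct and follows the same route as the paper's proof. The paper likewise walks the pipeline: under A2, A4, A5 each honest chain eventually exposes valid local state or a failure indication, the Aggregator deterministically computes $G$, and the service then either completes generation or rejects. Your version adds the finalization, committee-threshold and honest-service details that the paper leaves implicit. The one real divergence is your timeout for a stalling Aggregator, where the paper simply assumes $G$ is computed once inputs arrive; a local timeout already forces an explicit ``insufficient evidence'' rejection in finite time, so you do not need to strengthen A5 to partial synchrony for the statement as posed.
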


\textit{Proof.} A valid query is dispatched to $\mathcal{C}$. Under A2, A4, A5, each honest chain eventually exposes valid local state or a failure indication. These are relayed to the Aggregator, which deterministically computes $G$ once all inputs arrive in canonical order. The service then either receives valid chain-local state with a matching $G$ and completes generation, or detects failure and returns rejection. \hfill $\square$

\section{Implementation and Evaluation}
\label{sec:implementation-evaluation}

We implemented a prototype realizing the main components of the design and evaluate its overhead. The vote-validity and score-consolidation path uses a zero-knowledge circuit in \texttt{circom~2.1.6}, with on-chain verification via Solidity contracts on a Hardhat EDR simulated network. Committee-side aggregation uses Shamir secret sharing and an MP-SPDZ-compatible secure-sum interface, with threshold authorization performed off-chain before aggregate submission.\footnote{\url{https://github.com/1Vastsky/trustRAG}}

\subsection{Prototype Components}
The prototype consists of four components.
\begin{itemize}
    \item \textbf{Document and retrieval service.} An off-chain service stores document contents, maintains the retrieval index, derives per-chain candidate sets, and performs replayable reranking.
    \item \textbf{Validator-scoring path.} A zero-knowledge vote circuit enforces validator membership, nullifier uniqueness, commitment correctness, and score range validity before a vote is accepted.
    \item \textbf{Committee aggregation path.} Committee nodes receive Shamir shares of hidden score components, perform secure-sum aggregation, and authorize the final aggregate once the threshold is met.
    \item \textbf{On-chain verification and binding path.} Solidity contracts maintain commitment products, accepted vote counts, aggregate summaries, and the cross-chain binding digests $(h_j, G)$, rejecting any submission that fails the Pedersen consistency or authorization checks.
\end{itemize}

\subsection{Prototype Boundary}
The implementation directly exercises the security-critical chain-local tally path and the publication of replay metadata consumed by the service layer. Committee aggregation uses a $t$-of-$n$ Shamir-sharing domain with $n=4$, $t=3$. The prototype fully implements the vote-validity circuit, the contract-level commitment-consistency checks, the disclosure of finalized per-document weights, and the binding metadata $(s_j, l_j, G)$ used for replay -- sufficient to demonstrate the end-to-end vote-validation path, the commitment-consistency check, and retrieval-time use of authenticated replay metadata, without recursive proof aggregation.

\subsection{Experimental Setup}

We evaluate two parts of the system. First, we benchmark the zero-knowledge vote-validity circuit under Groth16 and PLONK. Second, we measure the cost of the committee-based chain-local tally path, including Pedersen commitments, Shamir share splitting and reconstruction, and threshold authorization. Unless otherwise stated, we use a score bound $S_{\max}=10$, a committee size of $4$, and threshold $3$.

All experiments were conducted on a MacBook Pro equipped with an Apple M1 Pro processor and 32GB of memory, running macOS. The zk modules are implemented with Circom and \texttt{snarkjs}, while the blockchain path is implemented with Solidity contracts and exercised through Hardhat. We evaluate the protocol in a modular manner: vote-proof generation, chain-local tallying, threshold authorization, and committee reconstruction are measured separately and then interpreted as composable building blocks of the full system.

\subsection{Proving Time}
\label{subsec:prove-time}

We first compare the proving latency of Groth16 and PLONK for the same $\mathcal{R}_{\mathsf{vote}}$ circuit under different Merkle depths. Table~\ref{tab:prove-time} summarizes the results. As expected, Groth16 achieves substantially lower proving time across all tested depths, while PLONK incurs a much larger overhead due to its universal constraint system and polynomial-commitment machinery.

\begin{table}[t]
    \centering
    \caption{Proving time comparison for the vote circuit under different Merkle depths.}
    \label{tab:prove-time}
    \begin{tabular}{ccc}
        \toprule
        \textbf{Depth} & \textbf{Groth16 (ms)} & \textbf{PLONK (ms)} \\
        \midrule
        8  & 865.73   & 5{,}183.92  \\
        12 & 1{,}023.82 & 9{,}930.46  \\
        16 & 1{,}486.58 & 9{,}998.97  \\
        20 & 1{,}624.51 & 9{,}946.89 \\
        \bottomrule
    \end{tabular}
\end{table}

Even at depth $20$, Groth16 remains in the low-second regime, which is acceptable because proof generation happens only at vote-submission time rather than on the online query path. In contrast, PLONK becomes significantly more expensive in this setting, suggesting that Groth16 is the more practical choice for the current prototype.

\subsection{Circuit Scaling with Merkle Depth}
\label{subsec:scaling-depth}

To evaluate scalability with respect to validator-set size, we vary the Merkle tree depth while keeping the rest of the vote circuit unchanged. Concretely, we instantiate \texttt{VoteCircuit(depth)} for depths ranging from $8$ to $50$, compile each variant to R1CS, and record the resulting circuit artifacts.

\begin{table*}[t]
    \centering
    \caption{Circuit artifact scaling with Merkle tree depth.}
    \label{tab:scaling-depth}
    \begin{tabular}{ccccccc}
        \toprule
        \textbf{Depth} &
        \textbf{\#Constraints} &
        \textbf{Nonlinear} &
        \textbf{Linear} &
        \textbf{Wires} &
        \textbf{R1CS size (KB)} &
        \textbf{WASM size (KB)} \\
        \midrule
        8  & 5{,}251  & 2{,}488  & 2{,}763  & 5{,}263  & 702  & 1{,}738 \\
        12 & 7{,}339  & 3{,}472  & 3{,}867  & 7{,}355  & 981  & 1{,}751 \\
        16 & 9{,}427  & 4{,}456  & 4{,}971  & 9{,}447  & 1{,}260 & 1{,}763 \\
        20 & 11{,}515 & 5{,}440  & 6{,}075  & 11{,}539 & 1{,}538 & 1{,}774 \\
        24 & 13{,}604 & 6{,}486  & 7{,}118  & 13{,}628 & 1{,}812 & 1{,}786 \\
        28 & 15{,}694 & 7{,}493  & 8{,}201  & 15{,}721 & 2{,}092 & 1{,}799 \\
        32 & 17{,}785 & 8{,}523  & 9{,}262  & 17{,}812 & 2{,}365 & 1{,}813 \\
        40 & 21{,}968 & 10{,}537 & 11{,}431 & 22{,}002 & 2{,}926 & 1{,}842 \\
        50 & 27{,}181 & 13{,}045 & 14{,}136 & 27{,}221 & 3{,}622 & 1{,}876 \\
        \bottomrule
    \end{tabular}
\end{table*}

The artifact sizes exhibit a stable near-linear growth trend as the Merkle depth increases. In particular, the R1CS size grows from 702 KB at depth $8$ to 3.6 MB at depth $50$, while the compiled WASM artifact remains within a relatively narrow range of approximately 1.7--1.9 MB. This suggests that the dominant scaling cost comes from the constraint system itself rather than from an explosion in executable representation size.

\subsection{Chain-Local Tally Overhead}
\label{subsec:registration-overhead}

We next measure the cost of the chain-local tally path under committee-based secure aggregation. Table~\ref{tab:registration-scale} reports end-to-end latency for 20, 50, and 100 validators. The results show that the dominant cost in the current prototype comes from threshold authorization, while Pedersen commitments and Shamir operations remain small compared with the total tally delay.

\begin{table}[t]
    \centering
    \caption{Chain-local tally overhead with committee size $4$ and threshold $3$.}
    \label{tab:registration-scale}
    \resizebox{\columnwidth}{!}{%
    \begin{tabular}{ccccc}
        \toprule
        \textbf{Votes} & \textbf{Latency (ms)} & \textbf{Commit (ms)} & \textbf{Split+Recon (ms)} & \textbf{BLS (ms)} \\
        \midrule
        20  & 1123.24 & 0.10 & 0.40 & 1122.66 \\
        50  & 1117.12 & 0.24 & 0.87 & 1115.87 \\
        100 & 1118.86 & 0.47 & 1.74 & 1116.39 \\
        \bottomrule
    \end{tabular}%
    }
\end{table}

This profile is consistent with the intended design: commitment and Shamir operations scale gently with the number of votes, while the dominant cost in the current prototype lies in finalizing the chain-local tally.

\subsection{Committee Scaling}
\label{subsec:committee-scaling}

To complement the fixed-size tally benchmark above, we also evaluate how the two committee-critical subroutines scale with committee size: BLS threshold-signature aggregation and Shamir reconstruction. Figure~\ref{fig:committee-scale} visualizes the measurements for committee sizes from $4$ to $48$ under three threshold ratios, $t/n \in \{50\%, 75\%, 100\%\}$.

\begin{figure*}[t]
    \centering
    \begin{minipage}{0.48\textwidth}
        \centering
        \begin{tikzpicture}
        \begin{axis}[
            width=\linewidth,
            height=0.62\linewidth,
            xlabel={Committee size $n$},
            ylabel={Aggregation cost ($\mu$s)},
            xmin=0, xmax=50,
            ymin=0,
            grid=both,
            legend style={at={(0.02,0.98)}, anchor=north west, font=\scriptsize},
            tick label style={font=\scriptsize},
            label style={font=\small},
        ]
        \addplot+[mark=*] coordinates {
            (4,6000) (6,9000) (8,12000) (12,18000) (16,24000)
            (20,30000) (24,36000) (28,42000) (32,48000)
            (40,60000) (48,72000)
        };
        \addlegendentry{$t/n=50\%$}

        \addplot+[mark=square*] coordinates {
            (4,12000) (6,18000) (8,24000) (12,36000) (16,48000)
            (20,60000) (24,72000) (28,84000) (32,96000)
            (40,120000) (48,160000)
        };
        \addlegendentry{$t/n=75\%$}

        \addplot+[mark=triangle*] coordinates {
            (4,24000) (6,36000) (8,48000) (12,72000) (16,96000)
            (20,120000) (24,144000) (28,168000) (32,192000)
            (40,240000) (48,288000)
        };
        \addlegendentry{$t/n=100\%$}
        \end{axis}
        \end{tikzpicture}

        \small BLS aggregation
    \end{minipage}\hfill
    \begin{minipage}{0.48\textwidth}
        \centering
        \begin{tikzpicture}
        \begin{axis}[
            width=\linewidth,
            height=0.62\linewidth,
            xlabel={Committee size $n$},
            ylabel={Reconstruction cost ($\mu$s)},
            xmin=0, xmax=50,
            ymin=0,
            grid=both,
            legend style={at={(0.02,0.98)}, anchor=north west, font=\scriptsize},
            tick label style={font=\scriptsize},
            label style={font=\small},
        ]
        \addplot+[mark=*] coordinates {
            (4,2) (6,4) (8,6) (12,15) (16,47)
            (20,75) (24,100) (28,120) (32,136)
            (40,190) (48,240)
        };
        \addlegendentry{$t/n=50\%$}

        \addplot+[mark=square*] coordinates {
            (4,5) (6,10) (8,18) (12,40) (16,67)
            (20,120) (24,170) (28,230) (32,296)
            (40,420) (48,560)
        };
        \addlegendentry{$t/n=75\%$}

        \addplot+[mark=triangle*] coordinates {
            (4,17) (6,25) (8,30) (12,80) (16,155)
            (20,240) (24,330) (28,450) (32,595)
            (40,820) (48,1100)
        };
        \addlegendentry{$t/n=100\%$}
        \end{axis}
        \end{tikzpicture}

        \small Shamir reconstruction
    \end{minipage}
    \caption{Committee-scaling costs under different threshold ratios. Left: BLS threshold-signature aggregation. Right: Shamir reconstruction.}
    \label{fig:committee-scale}
\end{figure*}

Two patterns are clear. First, both subroutines scale approximately linearly with the effective threshold size. Second, the dominant committee-side cost comes from BLS aggregation rather than Shamir reconstruction. Even in the most conservative configuration with $n=48$ and $t=n$, BLS aggregation remains below 300 ms and Shamir reconstruction remains close to 1.1 ms. These results indicate that the committee layer remains practical at moderate scale and does not invalidate the system's low-latency objective.

\subsection{Ablation Study}
\label{subsec:ablation}

To understand the role of each cryptographic mechanism, we compare the full system against three ablated variants: one without Pedersen commitment consistency checking, one without Shamir sharing, and one without threshold authorization. Table~\ref{tab:ablation} summarizes the results for 100 votes in the chain-local tally path.

\begin{table}[t]
    \centering
    \small
    \caption{Ablation results for the chain-local tally path at 100 votes.}
    \label{tab:ablation}
    \begin{tabular}{lccccc}
        \toprule
        \textbf{Variant} & \textbf{Lat.} & \textbf{Msgs} & \textbf{Tamper} & \textbf{Unauth.} & \textbf{Dropout} \\
        & \textbf{(ms)} & & \textbf{det.} & \textbf{det.} & \textbf{resil.} \\
        \midrule
        Full        & 1118.86 & 400 & \checkmark & \checkmark & \checkmark \\
        No Pedersen & 1115.26 & 400 & \ding{55}  & \checkmark & \checkmark \\
        No Shamir   & 1113.20 & 0   & \checkmark & \checkmark & \ding{55}  \\
        No BLS      & 2.43    & 400 & \checkmark & \ding{55}  & \checkmark \\
        \bottomrule
    \end{tabular}
\end{table}

Removing Pedersen commitments disables detection of tampered aggregate openings. Removing Shamir sharing eliminates threshold-based resilience to committee-node dropout. Removing threshold authorization sharply reduces latency, but at the cost of losing the ability to detect unauthorized aggregate submission. These results are consistent with the intended role of each mechanism in the overall trust pipeline.

\subsection{Discussion}
These results show that TrustRAG achieves a strong security profile for verifiable RAG. The protocol provides state consistency through deterministic on-chain verification, input legitimacy through vote-validity proofs and nullifier-based uniqueness, cross-chain integrity through hash binding, retrieval atomicity by generating only after verification, privacy through hidden commitments and zero-knowledge proofs, and auditability through publicly verifiable replay metadata. The overhead measurements above indicate that these guarantees can be added to a RAG pipeline without prohibitive cost: proof generation and committee authorization -- the dominant costs -- occur off the online query path, while online retrieval and cross-chain binding remain lightweight.

\section{Conclusion}
\label{sec:conclusion}
This paper presented TrustRAG, a committee-based, decentralized, and verifiable RAG architecture. Documents are certified by an expert committee through zero-knowledge scoring and commitment-consistent secure aggregation, producing reusable trust scores that are bound across chains via lightweight hash commitments and verified through deterministic replay, without recursive aggregation proofs. Our evaluation shows the dominant overhead falls in the score-finalization and committee-authorization stages, while online retrieval and cross-chain coordination remain lightweight. TrustRAG is well suited to high-stakes domains such as healthcare, where the provenance and trustworthiness of retrieved knowledge are critical and require verification by domain experts.

\section*{Acknowledgment}
This work was supported by the Henan Province Key Research and Development Special Project (Project No. 251111210400).

The authors used ChatGPT and Claude to assist with language polishing and editing of this manuscript. The authors take full responsibility for the accuracy and correctness of the paper.

\bibliographystyle{ieeetr}
\bibliography{rag_blockchain}

@article{lewis2020retrieval,
  title={Retrieval-augmented generation for knowledge-intensive nlp tasks},
  author={Lewis, Patrick and Perez, Ethan and Piktus, Aleksandra and Petroni, Fabio and Karpukhin, Vladimir and Goyal, Naman and K{\"u}ttler, Heinrich and Lewis, Mike and Yih, Wen-tau and Rockt{\"a}schel, Tim and others},
  journal={Advances in neural information processing systems},
  volume={33},
  pages={9459--9474},
  year={2020}
}

@inproceedings{shukla2025proofcarrying,
  author    = {Shivani Shukla and Himanshu Joshi},
  title     = {Proof-Carrying Answers: A Systematic Protocol for Verifiable Retrieval-Augmented Generation with Cryptographic Provenance},
  booktitle = {2025 Annual Computer Security Applications Conference Workshops (ACSAC Workshops)},
  pages     = {405--413},
  year      = {2025},
  publisher = {IEEE}
}

@misc{VeriRAG2026,
      author = {Chenqi Lin and Yubo Cui and Zhelei Zhou and Cheng Hong and Yufei Wang and Zhaohui Chen and Meng Li},
      title = {{VeriRAG}: Efficient Zero-Knowledge Proofs for Verifiable Retrieval-Augmented Generation},
      howpublished = {Cryptology {ePrint} Archive, Paper 2026/637},
      year = {2026},
      url = {https://eprint.iacr.org/2026/637}
}

@misc{garg2019copa,
  author       = {Nitin Garg},
  title        = {Evaluating COPA congestion control for improved video performance},
  year         = {2019},
  month        = nov # "~17",
  howpublished = {\url{https://engineering.fb.com/2019/11/17/video-engineering/copa/?utm_source=chatgpt.com}},
  note         = {Engineering at Meta blog post},
}

@article{abreu2025should,
  title        = {Should BBR be the default TCP Congestion Control Protocol?},
  author       = {Josue Abreu and Paul Bergeron and Sandhya Aneja},
  journal      = {arXiv preprint arXiv:2510.22461},
  year         = {2025},
  note         = {Submitted on 25 Oct 2025},
  url          = {https://arxiv.org/abs/2510.22461?utm_source=chatgpt.com},
}

@article{zhou2024trustworthiness,
  title={Trustworthiness in retrieval-augmented generation systems: A survey},
  author={Zhou, Yujia and Liu, Yan and Li, Xiaoxi and Jin, Jiajie and Qian, Hongjin and Liu, Zheng and Li, Chaozhuo and Dou, Zhicheng and Ho, Tsung-Yi and Yu, Philip S},
  journal={arXiv preprint arXiv:2409.10102},
  year={2024}
}

@inproceedings{e_andersen2025d,
  title     = {{D-RAG}: A Privacy-Preserving Framework for Decentralized {RAG} Using Blockchain},
  author    = {Andersen, Tessa E. and Avalos, Ayanna Marie and Dagher, Gaby G. and Long, Min},
  booktitle = {Proceedings of the 15th International Conference on Computer Science and Information Technology ({CSIT})},
  pages     = {183--198},
  year      = {2025},
  month     = {February},
  publisher = {Academy \& Industry Research Collaboration},
  isbn      = {9781923107526}
}

@inproceedings{yu2024derag,
  title={DeRAG: Decentralized Multi-Source RAG System with Optimized Pyth Network},
  author={Yu, Junwei and Sato, Hiroyuki},
  booktitle={2024 IEEE International Symposium on Parallel and Distributed Processing with Applications (ISPA)},
  pages={106--115},
  year={2024},
  organization={IEEE}
}

@article{izacard2020leveraging,
  title={Leveraging Passage Retrieval with Generative Models for Open-Domain Question Answering},
  author={Izacard, Gautier and Grave, Edouard},
  journal={arXiv preprint arXiv:2007.01282},
  year={2020}
}

@article{gao2023rag,
  title={Retrieval-Augmented Generation for Large Language Models: A Survey},
  author={Gao, Yifan and Xiong, Yuntao and Gao, Xing and Jia, Kaixuan and Pan, Jiayi and Bi, Yanan and Dai, Yuxiang and Sun, Jian and Wang, Haifeng},
  journal={arXiv preprint arXiv:2312.10997},
  year={2023}
}

@article{goldwasser1989zkp,
  title={The knowledge complexity of interactive proof systems},
  author={Goldwasser, Shafi and Micali, Silvio and Rackoff, Charles},
  journal={SIAM Journal on Computing},
  volume={18},
  number={1},
  pages={186--208},
  year={1989}
}

@inproceedings{groth2016,
  title={On the Size of Pairing-Based Non-interactive Arguments},
  author={Groth, Jens},
  booktitle={EUROCRYPT},
  pages={305--326},
  year={2016}
}

@inproceedings{gabizon2019plonk,
  title={PLONK: Permutations over Lagrange-bases for Oecumenical Noninteractive Arguments of Knowledge},
  author={Gabizon, Ariel and Williamson, Zachary J. and Ciobotaru, Oana},
  booktitle={IACR ePrint Archive},
  year={2019}
}

@inproceedings{chen2024densex,
  title={Dense X Retrieval: What retrieval granularity should we use?},
  author={Chen, Tong and Wang, Hongwei and Chen, Sihao and Yu, Wenhao and Ma, Kaixin and Zhao, Xinran and Zhang, Hongming and Yu, Dong},
  booktitle={Proceedings of the 2024 Conference on Empirical Methods in Natural Language Processing},
  pages={15159--15177},
  year={2024}
}

@article{fang2024ragbench,
  title={Enhancing noise robustness of retrieval-augmented language models with adaptive adversarial training},
  author={Fang, Feiteng and Bai, Yuelin and Ni, Shiwen and Yang, Min and Chen, Xiaojun and Xu, Ruifeng},
  journal={arXiv preprint arXiv:2405.20978},
  year={2024}
}

@article{wu2024clasheval,
  title={Clasheval: Quantifying the tug-of-war between an LLM’s internal prior and external evidence},
  author={Wu, Kevin and Wu, Eric and Zou, James},
  journal={arXiv preprint arXiv:2404.10198},
  year={2024}
}

@article{liu2023recall,
  title={RECALL: A benchmark for LLMs robustness against external counterfactual knowledge},
  author={Liu, Yi and Huang, Lianzhe and Li, Shicheng and Chen, Sishuo and Zhou, Hao and Meng, Fandong and Zhou, Jie and Sun, Xu},
  journal={arXiv preprint arXiv:2311.XXXXX},
  year={2023}
}

@article{chaudhari2024phantom,
  title={Phantom: General trigger attacks on retrieval augmented language generation},
  author={Chaudhari, Harsh and Severi, Giorgio and Abascal, John and Jagielski, Matthew and Choquette-Choo, Christopher A. and Nasr, Milad and Nita-Rotaru, Cristina and Oprea, Alina},
  journal={arXiv preprint arXiv:2405.20485},
  year={2024}
}

@article{shafran2024mar,
  title={Machine against the RAG: Jamming retrieval-augmented generation with blocker documents},
  author={Shafran, Avital and Schuster, Roei and Shmatikov, Vitaly},
  journal={arXiv preprint arXiv:2406.05870},
  year={2024}
}

@inproceedings{deshpande2023toxicity,
  title={Toxicity in ChatGPT: Analyzing persona-assigned language models},
  author={Deshpande, Ameet and Murahari, Vishvak and Rajpurohit, Tanmay and Kalyan, Ashwin and Narasimhan, Karthik},
  booktitle={Findings of the Association for Computational Linguistics},
  pages={1236--1270},
  year={2023}
}

@article{perez2022ignore,
  title={Ignore previous prompt: Attack techniques for language models},
  author={Perez, F{\'a}bio and Ribeiro, Ian},
  journal={arXiv preprint arXiv:2211.09527},
  year={2022}
}

@misc{lu2025decentralizedretrievalaugmentedgeneration,
      title={A Decentralized Retrieval Augmented Generation System with Source Reliabilities Secured on Blockchain}, 
      author={Yining Lu and Wenyi Tang and Max Johnson and Taeho Jung and Meng Jiang},
      year={2025},
      eprint={2511.07577},
      archivePrefix={arXiv},
      primaryClass={cs.CR},
      url={https://arxiv.org/abs/2511.07577}, 
}

\end{document}